\documentclass[11pt,reqno]{amsart}
\usepackage[T1]{fontenc}
\usepackage{lmodern}
\usepackage[margin=1.12in]{geometry}
\usepackage{amsmath,amssymb,mathtools,mathrsfs}
\usepackage{booktabs,array,enumitem,longtable}
\usepackage{microtype}
\usepackage{xurl}
\usepackage{tikz}
\usetikzlibrary{arrows.meta,calc}
\usepackage{placeins}
\usepackage[foot]{amsaddr}

\usepackage[hidelinks,bookmarksnumbered]{hyperref}
\hypersetup{pdftitle={An Explicit Counterexample to Tsirelson's Problem via a Linear System Game},
  pdfsubject={Finite presentations, approximate representations, and non-local games}}
\newtheorem{theorem}{Theorem}[section]
\newtheorem{proposition}[theorem]{Proposition}
\newtheorem{lemma}[theorem]{Lemma}
\newtheorem{corollary}[theorem]{Corollary}
\theoremstyle{definition}
\newtheorem{definition}[theorem]{Definition}
\theoremstyle{remark}
\newtheorem{remark}[theorem]{Remark}
\numberwithin{equation}{section}
\newcommand{\F}{\mathbb F}
\newcommand{\Z}{\mathbb Z}
\newcommand{\C}{\mathbb C}
\newcommand{\N}{\mathbb N}
\newcommand{\R}{\mathbb R}

\newcommand{\norm}[1]{\left\lVert #1\right\rVert}

\newcommand{\M}{\mathcal M}
\newcommand{\HH}{\mathcal H}
\DeclareMathOperator{\EL}{EL}
\DeclareMathOperator{\SL}{SL}
\DeclareMathOperator{\GL}{GL}
\DeclareMathOperator{\Mat}{M}
\DeclareMathOperator{\U}{U}
\DeclareMathOperator{\tr}{tr}
\DeclareMathOperator{\Tr}{Tr}
\DeclareMathOperator{\defect}{err}

\newcommand{\lean}[1]{{\small\nolinkurl{#1}}}
\setlist[enumerate]{label=\textup{(\roman*)},leftmargin=2.2em,itemsep=3pt}
\setlist[itemize]{leftmargin=1.8em,itemsep=3pt}
\allowdisplaybreaks[1]
\title[An Explicit Counterexample to Tsirelson's Problem via a Linear System Game]{An Explicit Counterexample to Tsirelson's Problem via a Linear System Game}

\author{Minbo Gao$^{*}$}
\address{$^{*}$Institute of Software, Chinese Academy of Sciences, Beijing, China; University of Chinese Academy of Sciences, Beijing, China; Tencent Hunyuan, Beijing, China.}
\email{gmb17@tsinghua.org.cn}

\author{Tianshi Yu$^{\dagger}$}  
\address{$^{\dagger}$Institute of Software, Chinese Academy of Sciences, Beijing, China.} 
\email{yuts@ios.ac.cn}

\author{Lihong Zhi$^{\ddagger}$}
\address{$^{\ddagger}$State Key Laboratory of Mathematical Sciences, Academy of Mathematics and Systems Science, Chinese Academy of Science, Beijing, China; University of Chinese Academy of Science, Beijing, China.} 
\email{lzhi@mmrc.iss.ac.cn}
\date{}
\subjclass[2020]{Primary 20F05, 81P45; Secondary 20F65, 46L10, 22D10.}
\keywords{Linear system game, Kazhdan group, approximate representation,
tracial ultraproduct, amalgamated product, solution group, Lean formalization}
\begin{document}
\begin{abstract}
We construct an explicit binary linear system game that
separates \(C_{qa}\), the closure of the set of finite dimensional quantum
correlations, from \(C_{qc}\), the set of commuting operator
correlations.
The game admits a perfect commuting operator strategy, while every
correlation in \(C_{qa}\) has a success probability strictly less than 1.
This provides a concrete counterexample to Tsirelson's problem
in its approximation form.
The defining system has \(1{,}417{,}152\) equations in
\(1{,}889{,}684\) variables, with exactly three nonzero coefficients
per equation and a single nonzero entry on the right hand side.
We also compute the classical value exactly.  The complete system is specified in Lean~4, and the separation
and exact classical value are formalized using Mathlib.
\end{abstract}
\maketitle

\section{Introduction}\label{sec:introduction}

Tsirelson's approximation problem asks whether correlations obtained
from commuting measurements can always be approximated by correlations
from measurements on finite dimensional tensor products
\cite{Tsirelson2006,ScholzWerner2008}.
Write \(C_q\) for the finite dimensional quantum correlations,
\(C_{qa}=\overline{C_q}\) for their closure, and \(C_{qc}\) for the
commuting correlations. The question is whether \(C_{qa}=C_{qc}\)
for every finite choice of questions and answers.
It is equivalent to Connes' embedding conjecture
\cite{Connes1976,JungeEtAl2011,Fritz2012,Ozawa2013}.
Ji, Natarajan, Vidick, Wright, and Yuen gave a negative answer using
recursive compression of nonlocal games~\cite{JiEtAl2020}.
Here we realize this separation by a concrete binary linear system
game whose complete defining data can be generated and stored.
We compute its classical value exactly, prove an explicit positive
lower bound for the gap between its commuting and quantum values,
and formalize the construction and these results in Lean~4.

Binary linear system games connect parity constraints with unitary
representations of finitely presented groups. Cleve and Mittal
characterized perfect finite dimensional strategies by operator
solutions~\cite{CleveMittal2014}. Cleve, Liu, and Slofstra associated
a solution group to each binary system and characterized perfect
commuting strategies by the nontriviality of a distinguished central
involution~\cite{CleveLiuSlofstra2017}.
Slofstra's embedding theorem transfers finitely presented group
constructions to this setting~\cite{Slofstra}.
His subsequent work produced a game with a perfect correlation in
\(C_{qa}\) but no perfect strategy on any tensor product of Hilbert
spaces~\cite{Slofstra2019}.
Thus failure to attain a perfect quantum strategy does not imply
that the quantum value is less than one. Our construction gives
a uniform bound below one, even as the dimensions grow.
The criterion of Slofstra and Vidick~\cite{SV} expresses this
requirement through approximate unitary representations of the
solution group.

The group construction builds on Kun and Thom's work on compressed
subgroups of Kazhdan groups~\cite{KunThom2026}.
Thom's normalization theorem~\cite{Thom}, made unconditional by
the spectral gap theorem of Alekseev, Liu, and Thom~\cite{ALT},
provides the obstruction to matrix approximation.
We use finitely presented Kazhdan covers~\cite{EJZ} to implement
this mechanism with a specified finite presentation.
Related work of Wang and Zhi uses the same normalization mechanism
to construct a polynomial separating matrix traces from a trace
on a group von Neumann algebra~\cite{WangZhi2026}.
We now introduce the game models, state the main results, and
explain the construction and its formalization.

\subsection{Notation and game models}
\label{subsec:statement-language}

A finite nonlocal game with two players is a tuple
\(G=(X,Y,\mathcal A,\mathcal B,\pi,V)\), where \(X,Y\) are finite
nonempty question sets, \(\mathcal A,\mathcal B\) are finite nonempty
answer sets, \(\pi\) is a probability distribution on \(X\times Y\),
and \(V:X\times Y\times\mathcal A\times\mathcal B\to[0,1]\)
is the payoff function.
The referee samples \((x,y)\) according to \(\pi\), sends one question
to each player, and receives answers \(a\in\mathcal A\) and
\(\beta\in\mathcal B\). The players may share resources in advance
but cannot communicate after receiving their questions.
A correlation is an array \(p(a,\beta\mid x,y)\) of nonnegative
conditional probabilities summing to one for each \((x,y)\).
Its expected payoff is
\[
 \omega(G,p)
 =\sum_{x\in X,\,y\in Y}\pi(x,y)
   \sum_{a\in\mathcal A,\,\beta\in\mathcal B}
      V(x,y,a,\beta)\,p(a,\beta\mid x,y).
\]
When \(V\) takes values in \(\{0,1\}\), as in our construction,
this is the success probability.

All Hilbert spaces are complex, with inner products linear in the
second variable; \(T^*\) denotes the adjoint of an operator \(T\).
A positive operator valued measurement (POVM) is a finite family
of bounded positive operators summing to the identity \(I\).
It is projective if its members are orthogonal projections.
For the fixed question and answer sets, let \(C_q\) consist of
correlations
\[
 p(a,\beta\mid x,y)
 =\langle\Omega,(P_a^x\otimes Q_\beta^y)\Omega\rangle,
\]
where \(\Omega\) is a unit vector in a finite dimensional tensor
product \(\HH_A\otimes\HH_B\), and \((P_a^x)_a\), \((Q_\beta^y)_\beta\)
are local POVMs for each question. The dimensions are unrestricted.
The set \(C_{qa}=\overline{C_q}\) is the closure in the Euclidean
space of correlation arrays. The commuting correlation set
\(C_{qc}\) consists of arrays
\[
 p(a,\beta\mid x,y)=\langle\Omega,P_a^xQ_\beta^y\Omega\rangle,
 \qquad P_a^xQ_\beta^y=Q_\beta^yP_a^x,
\]
where both families are POVMs on one Hilbert space of arbitrary
dimension and \(\Omega\) is a unit vector in that space.
In both models, the state is fixed, and each measurement depends
only on the corresponding player's question.
Finally, \(C_c\) is the convex hull of deterministic correlations
given by response functions \(X\to\mathcal A\) and
\(Y\to\mathcal B\). These are the classical correlations with
shared randomness.

For \(t\in\{c,q,qa,qc\}\), define
\[
 \omega_t(G)=\sup_{p\in C_t}\omega(G,p).
\]
The inclusion of the correlation sets and the continuity of the payoff gives
\[
 \omega_c(G)\le\omega_q(G)=\omega_{qa}(G)\le\omega_{qc}(G)\le1.
\]
We use \(\omega_q\) for the common quantum value throughout.
A strategy is perfect if its success probability is one;
a supremum of one need not be attained.
Thus \(\omega_q(G)<1=\omega_{qc}(G)\) separates
\(C_{qa}\) from \(C_{qc}\), including all limits of finite
dimensional quantum correlations.

To specify our game, write \(\N=\{1,2,\ldots\}\), let
\(\Z,\R,\C\) denote the integers, reals, and complex numbers,
and put \(\F_p=\Z/p\Z\) for a prime \(p\).
For a ring \(R\), let \(\Mat_{m,n}(R)\) denote the matrices
with \(m\) rows and \(n\) columns, and put
\(\Mat_d(R)=\Mat_{d,d}(R)\).
Consider a binary system \(A\xi=b\), with
\(A\in\Mat_{m,n}(\F_2)\), \(b\in\F_2^m\), and an unknown
\(\xi\in\F_2^n\), where \(m,n\in\N\).
Assume that every row has three nonzero entries, with support
\[
 V_r=\{e:A_{re}=1\}=\{e_{r,1}<e_{r,2}<e_{r,3}\}
 \qquad(1\le r\le m).
\]
An incidence is a pair \((r,e)\) with \(e\in V_r\).
The game \(G(A,b)\) has question sets \(X=\{1,\ldots,m\}\),
\(Y=\{1,\ldots,n\}\) and answer sets \(\mathcal A=\F_2^3\),
\(\mathcal B=\F_2\). It samples these \(3m\) incidences uniformly.
Alice receives \(r\) and answers a triple
\(a=(a_1,a_2,a_3)\in\F_2^3\); Bob receives \(e\) and answers
\(\beta\in\F_2\). For \(e=e_{r,j}\), they win precisely when
\[
 a_1+a_2+a_3=b_r\quad\text{in }\F_2,
 \qquad a_j=\beta.
\]
These are the parity and consistency conditions.
Pairs with \(e\notin V_r\) have probability zero and payoff zero.
We write \(e_j\) for the vector in \(\F_2^m\) with entry one
at position \(j\) and zero elsewhere.

The solution group connects these constraints to unitary
representations. Write \(F(S)\) for the free group on a finite
generating set \(S\), and \(\langle S\mid R\rangle\) for its
quotient imposing \(r=1\) for every relator \(r\in R\).
The solution group \(\Gamma(A,b)\) is generated by
\(x_1,\ldots,x_n,J\), with \(x_e^2=J^2=1\), with \(J\)
central, and with the generators in each row commuting and satisfying
\[
 x_{e_{r,1}}x_{e_{r,2}}x_{e_{r,3}}=J^{b_r}
 \qquad(1\le r\le m).
\]
Here \(b_r\) is read as zero or one in the exponent.
We call \(J\) the distinguished central involution, allowing
\(J=1\) until nontriviality is established.

The involution \(J\) encodes the parity sign: when it is represented
by \(-I\), the row relations become the operator form of the
binary parity constraints.

For \(d\in\N\), let \(\U(d)\) be the unitary group of
\(\Mat_d(\C)\), and let \(I_d\) be the identity matrix.
Relation errors are measured in the normalized Hilbert--Schmidt norm
\[
 \norm{X}_{2,d}=\left(\frac1d\Tr(X^*X)\right)^{1/2},
\]
where \(\Tr\) is the ordinary matrix trace.
A homomorphism \(f:F(S)\to\U(d)\) assigns a unitary matrix to
each generator; its error on a relator \(r\) is
\(\norm{f(r)-I_d}_{2,d}\).

\subsection{Main results}

Our first result gives a finite nonlocal game with a perfect
commuting operator strategy whose quantum value is strictly
less than one. This gap prevents finite dimensional quantum
correlations from approximating any perfect commuting correlation
for the game, even when the local dimensions are allowed to grow.

\begin{theorem}\label{thm:existence}
There exists a finite nonlocal game \(G\) with two players and
payoffs in \(\{0,1\}\) such that
\[
 \omega_q(G)=\omega_{qa}(G)<1=\omega_{qc}(G).
\]
In particular, \(C_{qa}\ne C_{qc}\) for its question and answer sets.
\end{theorem}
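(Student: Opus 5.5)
We will take \(G=G(A,b)\) for a suitable binary system and translate both sides of the inequality into statements about the solution group \(\Gamma(A,b)\).

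\emph{Perfect commuting strategy.} By Cleve--Liu--Slofstra~\cite{CleveLiuSlofstra2017}, \(\omega_{qc}(G(A,b))=1\), attained by a perfect commuting operator strategy, as soon as \(J\neq1\) in \(\Gamma(A,b)\). \emph{Quantum gap.} By the Slofstra--Vidick criterion~\cite{SV}, \(\omega_q(G)=1\) would give, for every \(\varepsilon>0\), some \(d\) and a homomorphism \(f:F(S)\to\U(d)\) such that each defining relator of \(\Gamma(A,b)\) has error at most \(\varepsilon\) in \(\norm{\cdot}_{2,d}\) and \(\norm{f(J)+I_d}_{2,d}\le\varepsilon\). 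The criterion is quantitative, with error polynomial in \(1-\omega_q\) and the number of equations. It therefore suffices to build a finitely presented group \(H\) with a central involution \(z\) such that
\begin{enumerate}
\item \(z\neq1\) in \(H\);
\item there is a uniform \(\delta>0\) such that no approximate representation of \(H\), with all relator errors below \(\delta\), can send \(z\) within \(\delta\) of \(-I_d\);
\end{enumerate}
and then to encode \((H,z)\) as \((\Gamma(A,b),J)\).

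The encoding is Slofstra's embedding theorem~\cite{Slofstra}. Any finitely presented group with a central involution embeds in a solution group of a system with three variables per row, with \(z\mapsto J\), and the embedding is compatible with approximate representations. Rewriting relators into triangular parity form, with auxiliary variables and amalgamated products, should produce the large explicit system with single-entry right-hand side. Embeddings are injective, so (i) transfers directly. For (ii) we need the quantitative version: an approximate representation of \(\Gamma(A,b)\) restricts to an approximate representation of \(H\), with error loss controlled by the presentation length. This step is bookkeeping, but it must be done carefully to keep the bound uniform in \(d\).

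The group \(H\) comes from the Kun--Thom compressed subgroup mechanism in a Kazhdan group~\cite{KunThom2026}. Start from a finitely presented Kazhdan cover~\cite{EJZ} of a group such as \(\EL_3(\F_p[t])\) or a related \(\SL\) lattice, and form an amalgamated product or HNN-type extension in which a central involution \(z\) is forced to agree with an element conjugated into a compressed copy of itself. Nontriviality of \(z\) will follow from normal forms in the amalgam, using the existence of a genuine (infinite-dimensional, von Neumann) representation in which \(z=-1\).

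For the obstruction (ii), Property (T) turns an approximate representation into a near-representation of the Kazhdan group, with constants given by the Kazhdan constant. Thom's normalization argument~\cite{Thom}, made unconditional by the spectral gap theorem of Alekseev--Liu--Thom~\cite{ALT}, then shows that in finite dimensions the normalized trace forces the compressed copy to carry a trace incompatible with \(z\approx-I_d\). Informally, finite-dimensional traces cannot rescale, so a compression by a factor different from one is impossible. We expect this to be the main obstacle: extracting an explicit uniform \(\delta\) requires explicit Kazhdan constants for the chosen cover and careful stability estimates. With these in place, \(\omega_q<1=\omega_{qc}\), and since \(\omega_q=\omega_{qa}\), it follows that \(C_{qa}\ne C_{qc}\).
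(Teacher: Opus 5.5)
Your outer reduction is sound and matches the paper. Cleve--Liu--Slofstra turns \(J\ne1\) into \(\omega_{qc}=1\). The Slofstra--Vidick criterion reduces \(\omega_q<1\) to excluding approximate representations with small relator errors and \(f(J)\) near \(-I_d\). Pulling approximate representations back along Slofstra's embedding loses only a dimension-free constant.

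\textbf{First gap: the group and the proof of (ii) are missing.} ``An amalgamated product or HNN-type extension in which \(z\) is forced to agree with an element conjugated into a compressed copy of itself'' is not a presentation. The heuristic that finite-dimensional traces cannot rescale does not identify any element that every approximation must kill. The paper's mechanism is specific.
\begin{enumerate}
\item It needs a Kazhdan pair \(H\le Q\) with \(H\) infranormal, a strict compressor \(t\), and some \(h\in H\setminus tHt^{-1}\). \(\EL_3\) over a polynomial ring supplies neither; this is why the paper adds shears acting on Laurent monomials and the Laurent model of Proposition~\ref{prop:strict-compression}.
\item In \(D=Q*_HQ\), the element \(c=t_1^{-1}t_2\) centralizes \(H\), because the two copies of \(tat^{-1}\in H\) are identified.
\item Thom's normalization theorem gives \(\pi(t_1)C\pi(t_1)^{-1}=C\) for \(C=\pi(H)'\cap\M\). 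Hence \(\pi(t_2t_1^{-1})=\pi(t_1)\pi(c)\pi(t_1)^{-1}\in C\), and \(\pi([h,t_2t_1^{-1}])=1\).
\item Amalgam normal forms show that \(w=[h,t_2t_1^{-1}]\) has infinite order.
\item Only then does the HNN extension \(zwz^{-1}=wJ\) over \(\langle w,J\rangle\cong\Z\times C_2\) give a central involution \(J\). It is nontrivial by the HNN normal form and killed wherever \(w\) is.
\end{enumerate}
Infinite order of \(w\) is essential: if \(w\) had odd order, the relation would force \(J=1\). Also, the representation with \(J=-1\) is a consequence of \(J\ne1\), not a tool for proving it.

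\textbf{Second gap: uniformity in \(d\).} You plan to extract \(\delta\) from explicit Kazhdan constants and ``stability estimates,'' and you leave this as the main obstacle. Property~(T) gives no general theorem moving approximate representations to nearby exact ones in normalized Hilbert--Schmidt norm. You also offer no quantitative substitute for the Alekseev--Liu--Thom internality theorem or Thom's normalization, which is where the obstruction actually comes from.

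The paper needs no constants, because of Lemma~\ref{lem:ultraproduct}. A dimension-free tolerance is equivalent to the exact statement that every homomorphism into \(\U(\M)\), for every tracial matrix ultraproduct \(\M\), kills the word. In that exact setting, property~(T) enters only as spectral gap of the conjugation action off \(\pi(H)'\cap\M\). This feeds Theorem~B of Alekseev--Liu--Thom, which gives internality, and then Thom's theorem, which gives normalization. Without this compactness step, or a worked-out finitary replacement, your condition (ii) remains unproved.
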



The next theorem constructs the game in Theorem~\ref{thm:existence}.
Its solution group has a nontrivial central involution whose image
under unitary assignments approaches the identity as the defining
relation errors tend to zero, uniformly in the matrix dimension.
All errors and distances are measured in the normalized
Hilbert-Schmidt norm.

\begin{theorem}
\label{thm:main}\label{thm:construction}
The construction in Sections~\ref{sec:cover}, \ref{sec:obstruction},
and~\ref{sec:translation} specifies
\[
 A\in\Mat_{m,n}(\F_2),\qquad b=e_{1{,}417{,}141},\qquad
 m=1{,}417{,}152,\quad n=1{,}889{,}684.
\]
Every row of \(A\) has exactly three nonzero entries, and
\(G=G(A,b)\) satisfies
\[
 \omega_q(G)=\omega_{qa}(G)<1=\omega_{qc}(G).
\]
Let \(\Sigma=\Gamma(A,b)=\langle S_\Sigma\mid R_\Sigma\rangle\),
where \(S_\Sigma=\{x_1,\ldots,x_n,J_\Sigma\}\) and
\(R_\Sigma\) is the standard relator list obtained from the
solution group equations by writing each \(u=v\) as \(uv^{-1}\).
Then \(J_\Sigma\ne1\). Nevertheless, for every \(\eta>0\)
, there exists \(\delta>0\) such that, for every \(d\in\N\)
and every homomorphism \(f:F(S_\Sigma)\to\U(d)\),
\begin{equation}\label{eq:main-quantifiers}
 \max_{r\in R_\Sigma}\norm{f(r)-I_d}_{2,d}\le\delta
 \quad\Longrightarrow\quad
 \norm{f(J_\Sigma)-I_d}_{2,d}<\eta.
\end{equation}
\end{theorem}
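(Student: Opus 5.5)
The proof has two layers: a group-theoretic statement about $\Sigma$, namely $J_\Sigma\ne1$ together with the uniform stability estimate \eqref{eq:main-quantifiers}, and a translation of that statement into game values. I would first establish the group-theoretic facts and then deduce the value separation from the known criteria. Perfect commuting strategies correspond to $J_\Sigma\ne1$ by Cleve--Liu--Slofstra. For the quantum value I would use the Slofstra--Vidick correspondence: a finite dimensional strategy with value $1-\epsilon$ yields, after a standard rounding of Alice's and Bob's POVMs to projective measurements and then to $\pm1$ observables, a homomorphism $f:F(S_\Sigma)\to\U(d)$ with $f(J_\Sigma)=-I_d$. Its relator errors are $O(\sqrt{\epsilon})$ with a constant depending only on the number of incidences $3m$, after passing to a reduced density/trace on a suitable subspace. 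Then $\norm{f(J_\Sigma)-I_d}_{2,d}=2$, so taking $\eta=2$ in \eqref{eq:main-quantifiers} gives $\epsilon\ge\epsilon_0(\delta)>0$ uniformly in $d$. Hence $\omega_q(G)\le1-\epsilon_0<1$, and $\omega_{qa}=\omega_q$ by continuity.

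For the group-theoretic part I would follow the construction order of Sections~\ref{sec:cover}--\ref{sec:translation}.

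First, take a finitely presented Kazhdan cover $K$ in the sense of~\cite{EJZ}, for instance one built from $\EL_3$ over a polynomial ring over $\F_p$. It comes with an explicit presentation, and it contains an element $z$ that is nontrivial but central and normalised in the sense of Thom's normalization theorem.

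Second, apply Thom's theorem, made unconditional by the spectral gap result of~\cite{ALT}, to obtain the needed stability statement: every $\delta$-approximate unitary representation of $K$ in the normalized Hilbert--Schmidt norm sends $z$ within $\eta$ of $I_d$, uniformly in $d$. The Kazhdan property is what makes this uniform. Property (T) forces almost invariant vectors for the conjugation action on $\Mat_d(\C)$ equipped with the Hilbert--Schmidt structure. This lets one correct $f$ to be nearly equivariant on the compressed subgroup of~\cite{KunThom2026}, and normalization then forces the image of $z$ near $I_d$.

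Third, embed $K$, with $z$ mapping to $J_\Sigma$, into a solution group via Slofstra's embedding theorem~\cite{Slofstra}, using amalgamated products so that $J_\Sigma\ne1$ is preserved. Nontriviality should follow from normal forms in the amalgam, giving an injective map from $K$. I would then check that the embedding is quantitatively controlled: relator errors in $\Sigma$ pull back to relator errors in $K$ up to a constant factor. This holds because each relator of $K$ is a product of boundedly many conjugates of relators of $\Sigma$, and conjugation by unitaries is an isometry for $\norm{\cdot}_{2,d}$. Composing this with the second step transfers the estimate to $\Sigma$.

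Finally, verify the explicit counts $m$, $n$ and the right-hand side $b=e_{1{,}417{,}141}$ by tallying the translation of each relator into rows with three entries, introducing auxiliary variables to split longer words. This is bookkeeping, which the paper also formalizes in Lean.

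The main obstacle I expect is the third step. Preserving nontriviality of $J$ through the Slofstra embedding is standard, but the quantitative control in the reverse direction is not. The embedding introduces auxiliary generators, and an approximate representation of $\Sigma$ must restrict to an approximate representation of $K$ on the original generators with errors controlled linearly, without approximate versions of the embedding's combinatorics failing. My first attempt would express every relator of $K$ as an explicit product of conjugates of $\Sigma$-relators, with a bounded word-length certificate, and sum the errors using the triangle inequality.
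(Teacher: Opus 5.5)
\paragraph{The analytic and game-theoretic steps are fine.} Your game-value layer is sound and matches the paper's use of the cited criteria: Cleve--Liu--Slofstra gives $\omega_{qc}=1$, and Slofstra--Vidick together with \eqref{eq:main-quantifiers} at $\eta=2$ gives $\omega_q<1$. Your third step is also less of an obstacle than you expect. Since $\iota$ is a homomorphism between finitely presented groups, each source relator lies, after substitution, in the normal closure of $R_\Sigma$. A fixed finite number of conjugates therefore suffices, and unitary invariance plus the triangle inequality gives linear control. The paper does the same thing softly: it composes ultraproduct homomorphisms with $\iota$ and invokes Lemma~\ref{lem:ultraproduct}. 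Neither version needs to extend approximate representations.

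\paragraph{The gap: nothing in your plan produces the obstruction.} A finitely presented Kazhdan cover does not come with a nontrivial element that every matrix ultraproduct homomorphism kills. Thom's theorem does not produce one either. Theorem~\ref{thm:normalization} only says that, for an infranormal Kazhdan pair $H\le Q$, each $\pi(q)$ normalizes $\pi(H)'\cap\M$. An element killed by all such $\pi$ certifies that the ambient group is not hyperlinear; in a residually finite group, for instance, every nontrivial element survives in some finite-dimensional unitary representation. So asserting that $K$ ``contains'' such a $z$ assumes the main point.

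The missing mechanism has three parts:
\begin{itemize}
\item A strict compressor $t$ with $tHt^{-1}<H$ and a witness $h\in H\setminus tHt^{-1}$, with $H$ infranormal in $Q$. The paper proves strictness through the Laurent model in Proposition~\ref{prop:strict-compression}.
\item The amalgamated double $D=Q*_HQ$. There $t_1^{-1}t_2$ centralizes $H$, so normalization by $\pi(t_1)$ puts $\pi(t_2t_1^{-1})$ into $\pi(H)'\cap\M$. This forces $\pi(w)=1$ for $w=[h,t_2t_1^{-1}]$, while the amalgam normal form makes $w$ of infinite order.
\item An HNN extension with $zwz^{-1}=wJ$. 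This converts $w$ into a central \emph{involution} $J\ne1$ with $\pi(J)=1$.
\end{itemize}
The involution matters because Slofstra's embedding theorem takes a group over $\Z/2\Z$. An embedding cannot send a central element of order other than two to $J_\Sigma$.

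In your plan the amalgam appears only as a device for preserving nontriviality through Slofstra's embedding, so nothing produces the vanishing. Your ``near-equivariant correction'' in the second step does not correspond to an actual argument.

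\paragraph{The explicit data.} The theorem is about the specified system, and the values $m$, $n$ and $b=e_{1{,}417{,}141}$ come from the particular relator list of $\Lambda$. They also depend on the substitution $g_k\mapsto(y_{2k-1}y_{2k})^2$ and the wagon wheel. An unspecified cover $K$ would not reproduce them.
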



The nontriviality of \(J_\Sigma\) yields a perfect commuting strategy
by Proposition~\ref{prop:commuting}, so \(\omega_{qc}(G)=1\).
By the criterion of Slofstra and Vidick
\cite[Corollary~5.3]{SV}, stated in Proposition~\ref{thm:SV},
the equality \(\omega_q(G)=1\) would yield approximate representations
with arbitrarily small relator errors while the image of
\(J_\Sigma\) remains a fixed positive distance from the identity.
This contradicts condition~\eqref{eq:main-quantifiers}, which we call
\emph{uniform asymptotic triviality}.
Thus \(\omega_q(G)<1\), and hence \(C_{qa}\ne C_{qc}\).


The Lean development also proves the exact classical value (Proposition \ref{prop:classical}) and
explicit bounds on the quantum gap
(Remark~\ref{rem:numerical-values}):
\[
 \omega_c(G)=1-\frac1{4{,}251{,}456},\qquad
 2^{-2^{50003}}\le1-\omega_q(G)\le\frac1{4{,}251{,}456}.
\]
The lower bound is independent of the local dimensions.
The upper bound follows from \(\omega_q(G)\ge\omega_c(G)\).
\subsection{Proof strategy}

The construction first produces a nontrivial group element that
is uniformly asymptotically trivial, then transfers this
obstruction to a solution group.
Section~\ref{sec:cover} specifies a finitely presented group \(Q\)
and a subgroup \(H\le Q\), both with Kazhdan's property~\((T)\),
the rigidity condition recalled in Section~\ref{sec:preliminaries}.
The elements conjugating \(H\) into itself generate \(Q\), and a
specified element \(t\) satisfies \(tHt^{-1}<H\).
Finitely presented Kazhdan covers provide property~\((T)\), while
a Laurent polynomial matrix model proves that the inclusion is
proper.

The amalgamated double \(D=Q*_H Q\), formed by identifying \(H\)
in two copies of \(Q\), contains a specified word \(w\) of
infinite order. The normalization theorem forces every homomorphism
from \(D\) into the unitary group of a tracial matrix ultraproduct
to send \(w\) to the identity
(Proposition~\ref{prop:double-obstruction}).
Such ultraproducts encode bounded matrix sequences modulo errors tending
to zero along an ultrafilter in normalized Hilbert--Schmidt norm;
they are defined in Section~\ref{sec:preliminaries}.
Adjoining a central involution \(J\) and a generator \(z\) with
\(zwz^{-1}=wJ\) produces a finitely presented group \(\Lambda\).
A normal form argument shows that \(J\ne1\), while the relation
forces every such ultraproduct homomorphism to send \(J\) to the
identity (Proposition~\ref{prop:central-obstruction}).

Slofstra's involution substitution and wagon wheel constructions
\cite{Slofstra} give the specified binary system and an embedding
\(\iota:\Lambda\to\Sigma=\Gamma(A,b)\) with
\(\iota(J)=J_\Sigma\), so \(J_\Sigma\ne1\)
(Proposition~\ref{prop:solution-embedding}).
Every homomorphism from \(\Sigma\) into the unitary group of a
tracial matrix ultraproduct sends \(J_\Sigma\) to the identity,
because its composition with \(\iota\) sends \(J\) to the identity.
Lemma~\ref{lem:ultraproduct} converts this vanishing property into
\eqref{eq:main-quantifiers}.
This completes the passage from the group obstruction to the
separation in Theorems~\ref{thm:main} and~\ref{thm:existence}.

\subsection{Formalization and its relation to the proof}
\label{subsec:formalization}


The accompanying development in Lean~4~\cite{deMouraUllrich2021}%
\footnote{Source repository:
\url{https://github.com/leanhnhe091221-boop/explicit-linear-system-game-formalization}.}
It specifies the finite presentations, word substitutions, binary
system, and game. Using mathlib~\cite{mathlib2020}, it proves
the main separation, the group assertions in
Theorem~\ref{thm:main}, the exact classical value, and the explicit
lower bound on the quantum gap.
The formal statements use the same strategy models as the paper:
arbitrary POVMs, all finite local dimensions for quantum strategies,
and Hilbert spaces of arbitrary dimension for commuting strategies.
Likewise, \(C_{qa}\) is defined as the closure of \(C_q\).

The final results use no admitted proofs or unproved mathematical
hypotheses. The formalization follows the paper's construction,
with some intermediate results formalized only in the form needed
for the final conclusions.
The repository fixes the Lean and mathlib versions and provides
build instructions and axiom audits.
Its \texttt{README} and Appendix~\ref{app:formalization} document
the correspondence with the manuscript, the precise scope of
the intermediate formalizations, and the verification commands.

\subsection{Organization of the paper}

Section~\ref{sec:preliminaries} introduces approximate representations,
tracial matrix ultraproducts, and normalization.
Sections~\ref{sec:cover} and~\ref{sec:obstruction} construct the
Kazhdan group pair and the central involution.
Section~\ref{sec:translation} specifies the binary system, and
Section~\ref{sec:games} proves the separation of game values.
Appendix~\ref{app:data} records the presentation counts and
reproducibility data, and Appendix~\ref{app:formalization} describes
the correspondence with Lean.
\section{Preliminaries}\label{sec:preliminaries}

This section provides the approximation and rigidity results used
in the construction. We relate uniform asymptotic triviality to
homomorphisms into tracial matrix ultraproducts, then explain how
property~\((T)\) and the results in \cite{ALT,Thom} give the
normalization theorem needed in
Section~\ref{sec:obstruction}.

\subsection{Conventions}\label{subsec:conventions}

We retain the notation of Section~\ref{subsec:statement-language}.
All groups are discrete, and their identity is denoted by \(1\).
We write \(\langle S\rangle\) for the subgroup generated by \(S\),
\(H\le Q\) for subgroup inclusion, \(H<Q\) for proper inclusion,
and \(N\triangleleft Q\) for a normal subgroup. Our commutator
convention is
\[
 [a,b]=aba^{-1}b^{-1}.
\]
The normal closure \(\langle\!\langle R\rangle\!\rangle\) of a
relator list \(R\) consists of finite products
of conjugates of its elements and their inverses.
A map on the generators of \(\langle S\mid R\rangle\) extends to
a group homomorphism exactly when every relator maps to the
identity; see \cite{LS}. We continue to allow the distinguished
central involution to be the identity until nontriviality is proved.

For a Hilbert space \(\mathcal K\), let \(\U(\mathcal K)\) denote
its group of unitary operators. A unitary representation of a group
\(Q\) is a homomorphism \(Q\to\U(\mathcal K)\).
We write \(\norm{X}\) for the operator norm and
\(\tr_d=d^{-1}\Tr\) for the normalized matrix trace. Thus
\[
 \norm{X}_{2,d}=\tr_d(X^*X)^{1/2}
 \qquad(X\in\Mat_d(\C)).
\]
Matrix and generator indices start at one, except for word positions
explicitly numbered from zero.

\subsection{Matrix ultraproducts and approximate representations}
\label{subsec:ultraproducts}

Tracial matrix ultraproducts turn approximate matrix relations into
exact relations. We use the conventions of \cite[Section~2.1]{ALT}.
Fix a nonprincipal ultrafilter \(\omega\) on \(\N\): it is a
collection of subsets closed under finite intersections and
supersets, containing no finite set and containing exactly one of
each set and its complement. A set in \(\omega\) is called
\(\omega\)-large. For a bounded complex sequence, the ultralimit
\(\lim_{k\to\omega}z_k=z\) is characterized by
\(\{k:|z_k-z|<\varepsilon\}\in\omega\) for every
\(\varepsilon>0\). It exists uniquely and agrees with the ordinary
limit whenever that limit exists.

For positive integers \(d_k\), with no growth assumption, put
\[
 \begin{aligned}
 \ell^\infty(\Mat_{d_k})
 &=\{(a_k):a_k\in\Mat_{d_k}(\C),\ \sup_k\norm{a_k}<\infty\},\\
 \mathcal J_\omega
 &=\{(a_k)\in\ell^\infty(\Mat_{d_k}):
                 \lim_{k\to\omega}\norm{a_k}_{2,d_k}=0\}.
 \end{aligned}
\]
The set \(\mathcal J_\omega\) is a closed two sided \(*\)-ideal.
The tracial matrix ultraproduct is
\begin{equation}\label{eq:ultraproduct}
 \M=\prod_\omega(\Mat_{d_k}(\C),\tr_{d_k})
   =\ell^\infty(\Mat_{d_k})/\mathcal J_\omega.
\end{equation}
Writing \([(a_k)]\) for a class in this quotient, its trace and
associated norm satisfy
\[
 \tau([(a_k)])=\lim_{k\to\omega}\tr_{d_k}(a_k),\qquad
 \norm{[(a_k)]}_2
   =\tau([(a_k)]^*[(a_k)])^{1/2}
   =\lim_{k\to\omega}\norm{a_k}_{2,d_k}.
\]
In particular, \(\norm{x}_2=0\) exactly when \(x=0\).
In its representation by left multiplication on the trace Hilbert
space \(L^2(\M,\tau)\), the ultraproduct is a finite von Neumann
algebra with faithful normal trace \(\tau\).
Here \(L^2(\M,\tau)\) is the completion for the inner product
\(\langle x,y\rangle=\tau(x^*y)\).
We write \(\U(\M)=\{u\in\M:u^*u=uu^*=1\}\).
For \(S\subseteq\M\), its relative commutant is
\[
 S'\cap\M=\{x\in\M:xs=sx\text{ for every }s\in S\}.
\]

We measure approximate representations by their defining relator
errors, following \cite[Definition~2.1]{SV}. The uniformity in the
next definition is needed to exclude approximation in growing
matrix dimensions.

\begin{definition}\label{def:approx}
Let \(P=\langle S\mid R\rangle\) be a finite presentation and
\(\delta\ge0\). A \(\delta\)-representation is a homomorphism
\(f:F(S)\to\U(d)\), for some \(d\in\N\), whose maximum relator error satisfies
\[
 \operatorname{err}_R(f)
 :=\max_{r\in R}\norm{f(r)-I_d}_{2,d}\le\delta.
\]
with \(\defect_R(f)=0\) when \(R\) is empty.
A word \(v\in F(S)\) is uniformly asymptotically trivial if for
every \(\eta>0\) there is \(\delta>0\) such that every
\(\delta\)-representation, in every dimension, satisfies
\(\norm{f(v)-I_d}_{2,d}<\eta\).
\end{definition}

The following standard ultraproduct criterion connects exact
homomorphisms with uniform asymptotic triviality. We include a
short proof to make the uniformity in the matrix dimension explicit.

\begin{lemma}\label{lem:ultraproduct}
Let \(P=\langle S\mid R\rangle\) be a finite presentation and
\(v\in F(S)\). The following are equivalent:
\begin{enumerate}
\item every homomorphism \(P\to\U(\M)\), for every tracial matrix
ultraproduct \(\M\), sends the image of \(v\) to \(1\);
\item \(v\) is uniformly asymptotically trivial.
\end{enumerate}
\end{lemma}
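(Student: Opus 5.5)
We prove both implications directly; the only subtlety is lifting unitaries of \(\M\) to sequences of genuine unitary matrices.

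\emph{(ii)\(\Rightarrow\)(i).} Let \(\M=\prod_\omega(\Mat_{d_k}(\C),\tr_{d_k})\) and let \(\rho:P\to\U(\M)\) be a homomorphism. For each generator \(s\in S\), the element \(\rho(s)\) lifts to a sequence of unitary matrices \(u_s^{(k)}\in\U(d_k)\). To see this, take any bounded representative \((a_k)\). Since \(\rho(s)\) is unitary, \(\lim_{k\to\omega}\norm{a_k^*a_k-I}_{2,d_k}=0\). Replace \(a_k\) by the unitary factor \(u_k\) in its polar decomposition \(a_k=u_k|a_k|\). The estimate
\[
\norm{a_k-u_k}_{2,d_k}=\norm{|a_k|-I}_{2,d_k}\le\norm{a_k^*a_k-I}_{2,d_k}
\]
then shows \([(u_k)]=\rho(s)\). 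It follows from \(|t-1|\le|t^2-1|\) for \(t\ge0\), applied through functional calculus.

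Let \(f_k:F(S)\to\U(d_k)\) be the homomorphisms determined by these lifts. Word evaluation commutes with the quotient map, so for each relator \(r\) we get \([(f_k(r))]=\rho(r)=1\), that is, \(\lim_{k\to\omega}\norm{f_k(r)-I}_{2,d_k}=0\). Since \(R\) is finite, for every \(\delta>0\) the set of \(k\) for which \(f_k\) is a \(\delta\)-representation is \(\omega\)-large. Fix \(\eta>0\) and choose \(\delta\) as in (ii). Then \(\norm{f_k(v)-I}_{2,d_k}<\eta\) on an \(\omega\)-large set, so \(\norm{\rho(v)-1}_2\le\eta\). As \(\eta\) is arbitrary, \(\rho(v)=1\).

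\emph{(i)\(\Rightarrow\)(ii).} We argue by contraposition. Suppose some \(\eta>0\) has no valid \(\delta\). Then for each \(k\in\N\) there is a \(1/k\)-representation \(f_k:F(S)\to\U(d_k)\) with \(\norm{f_k(v)-I}_{2,d_k}\ge\eta\). The dimensions \(d_k\) are arbitrary, and the ultraproduct definition imposes no growth assumption, so we may form \(\M\) from this sequence. Define \(\rho(s)=[(f_k(s))_k]\in\U(\M)\), using that unitary sequences are bounded by \(1\).

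Each relator satisfies \(\norm{f_k(r)-I}_{2,d_k}\le1/k\to0\), so \(\rho(r)=1\) and \(\rho\) descends to a homomorphism \(P\to\U(\M)\). On the other hand, the ultralimit of \(\norm{f_k(v)-I}_{2,d_k}\) is at least \(\eta\), so \(\norm{\rho(v)-1}_2\ge\eta>0\). This contradicts (i).

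The main point to check carefully is the unitary lifting step in (ii)\(\Rightarrow\)(i). The normalized 2-norm is not submultiplicative in general, but word evaluation is compatible with the quotient because \(\mathcal J_\omega\) is an ideal and every factor in a word is bounded in operator norm by \(1\). No uniform bound on \(d_k\) is used anywhere in the argument.
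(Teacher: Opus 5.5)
Your proposal is correct and follows the paper's proof essentially step for step. For (i)\(\Rightarrow\)(ii), you argue by contraposition, assembling a sequence of \(1/k\)-representations with \(\norm{f_k(v)-I_{d_k}}_{2,d_k}\ge\eta\) into a homomorphism into the ultraproduct. For (ii)\(\Rightarrow\)(i), you lift each generator's image to unitary matrices via the polar decomposition and \(|t-1|\le|t^2-1|\), then use finiteness of \(R\) to make the defect small on an \(\omega\)-large set.
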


\begin{proof}
Suppose (ii) fails. There are \(\eta_0>0\), dimensions \(d_k\),
and homomorphisms \(f_k:F(S)\to\U(d_k)\) such that
\[
 \defect_R(f_k)\le1/k,\qquad
 \norm{f_k(v)-I_{d_k}}_{2,d_k}\ge\eta_0.
\]
The unitary classes \([(f_k(s))]\), for \(s\in S\), satisfy every
relator and therefore define \(\pi:P\to\U(\M)\), with \(\M\)
as in \eqref{eq:ultraproduct}. Since
\[
 \norm{\pi(v)-1}_2
   =\lim_{k\to\omega}\norm{f_k(v)-I_{d_k}}_{2,d_k}\ge\eta_0,
\]
condition (i) fails. This proves (i)\(\Rightarrow\)(ii).

Conversely, assume (ii) and let \(\pi:P\to\U(\M)\).
Every unitary \(u=[(a_k)]\) in \(\M\) has unitary matrix
representatives. Indeed, completing the polar part of each \(a_k\)
to a unitary \(u_k\) and using \(|t-1|\le|t^2-1|\) for
\(t\ge0\), we obtain
\[
 \norm{a_k-u_k}_{2,d_k}
 \le\norm{a_k^*a_k-I_{d_k}}_{2,d_k}
 \underset{k\to\omega}{\longrightarrow}0.
\]
Thus \([(u_k)]=u\). Choose such representatives for each
\(\pi(s)\) and extend them to homomorphisms
\(f_k:F(S)\to\U(d_k)\). Each relator error tends to zero along
\(\omega\), and finiteness of \(R\) implies the same for
\(\defect_R(f_k)\). By (ii),
\(\norm{f_k(v)-I_{d_k}}_{2,d_k}\) also tends to zero along
\(\omega\). Taking ultralimits gives \(\norm{\pi(v)-1}_2=0\),
and hence \(\pi(v)=1\).
\end{proof}

The composition argument below will transfer the obstruction to
the solution group in Section~\ref{sec:games}.

\begin{remark}\label{rem:embedding-direction}
Let \(\iota:P\to Q\) be a homomorphism and \(v\in P\).
If every homomorphism from \(P\) into the unitary group of a
tracial matrix ultraproduct sends \(v\) to \(1\), then every such
homomorphism \(\rho\) from \(Q\) satisfies
\(\rho(\iota(v))=(\rho\circ\iota)(v)=1\).
If \(\iota\) is injective and \(v\ne1\), then \(\iota(v)\ne1\).
For finitely presented groups, Lemma~\ref{lem:ultraproduct}
converts the vanishing conclusion into uniform asymptotic
triviality of a word representing \(\iota(v)\).
No extension of approximate representations is required.
\end{remark}

\subsection{Kazhdan groups and normalization}
\label{subsec:normalization}

We recall the property~\((T)\) facts needed for the group pair
constructed in Section~\ref{sec:cover}; see \cite{BHV}.
A group \(Q\) has property~\((T)\), or is Kazhdan, if every
unitary representation \(\rho:Q\to\U(\mathcal K)\) with almost
invariant unit vectors has
a nonzero invariant vector. Here almost invariance means that
for every nonempty finite \(F\subset Q\) and \(\varepsilon>0\), some unit
vector \(\xi\) satisfies
\(\max_{q\in F}\norm{\rho(q)\xi-\xi}<\varepsilon\).
Equivalently, there are a nonempty finite \(F\subset Q\) and
\(\kappa>0\) such that every unitary representation
\(\rho:Q\to\U(\mathcal K)\) satisfies
\[
 \max_{q\in F}\norm{\rho(q)\xi-\xi}\ge\kappa\norm{\xi}
 \qquad(\xi\perp\mathcal K^Q),
\]
where \(\mathcal K^Q=\{\xi:\rho(q)\xi=\xi\text{ for all }q\in Q\}\)
is the invariant subspace.
Discrete Kazhdan groups are finitely generated, and property~\((T)\)
passes to quotients and is stable under extensions: if
\(N\triangleleft Q\) and both \(N\) and \(Q/N\) have property~\((T)\),
then so does \(Q\).
We also use property~\((T)\) of \(\SL_3(\Z)\), the group of integral
\(3\times3\) matrices of determinant one~\cite{BHV}.

The subgroup condition used with property~\((T)\) records which
conjugations carry a subgroup into itself. We follow
\cite[Introduction]{KunThom2026}.

\begin{definition}\label{def:infranormal}
For \(H\le Q\), its compression semigroup is
\[
 P_H=\{t\in Q:tHt^{-1}\le H\}.
\]
The subgroup \(H\) is infranormal in \(Q\) if
\(\langle P_H\rangle=Q\), where generation is as a group and
therefore allows inverses. An element \(t\in P_H\) is a strict
compressor if \(tHt^{-1}<H\).
\end{definition}

We now explain the analytic input that turns these group properties
into normalization. A subalgebra of \(\M\) is internal if it has
the form
\[
 [B_k]_\omega
 =\{[(b_k)]:b_k\in B_k,\ \sup_k\norm{b_k}<\infty\},
\]
where each \(B_k\subseteq\Mat_{d_k}(\C)\) is a unital
\(*\)-subalgebra.
For a unital von Neumann subalgebra \(C\subseteq\M\), let
\(E_C:\M\to C\) be the conditional expectation preserving the
trace. Its extension to \(L^2(\M,\tau)\) is the orthogonal
projection onto \(L^2(C)\), the closure of \(C\) in that Hilbert
space.

Fix \(\pi:Q\to\U(\M)\) and put \(C=\pi(H)'\cap\M\).
The conjugation action
\(\rho_H(h)x=\pi(h)x\pi(h)^*\) extends to a unitary
representation on \(L^2(\M,\tau)\), with invariant subspace
\(L^2(C)\); see \cite[Section~2.1]{ALT}.
If \(H\) has property~\((T)\), choose a nonempty finite generating
set \(h_1,\ldots,h_\ell\) containing a Kazhdan set and put
\(u_j=\pi(h_j)\).
Applying the Kazhdan inequality on the orthogonal complement of
\(L^2(C)\) gives a constant \(\gamma\in(0,1)\) such that
\[
 \frac1{4\ell}\sum_{j=1}^{\ell}\norm{u_jx-xu_j}_2^2
   \ge\gamma\norm{x-E_Cx}_2^2\qquad(x\in\M).
\]
This is spectral gap off the commutant, in the normalization of
\cite[(1.1)]{ALT}.

By \cite[Theorem~B]{ALT}, the relative commutant of any finite
unitary tuple with this spectral gap is internal. Thus the
commutant of every image of a Kazhdan group is internal.
This supplies the centralizer hypothesis of
\cite[Theorem~1.2]{Thom}, yielding the following result.

\begin{theorem}\label{thm:normalization}
Let \(H\le Q\) be Kazhdan groups, with \(H\) infranormal in \(Q\).
For every homomorphism \(\pi:Q\to\U(\M)\) into a tracial matrix
ultraproduct,
\[
 \pi(q)\bigl(\pi(H)'\cap\M\bigr)\pi(q)^{-1}
   =\pi(H)'\cap\M\qquad(q\in Q).
\]
\end{theorem}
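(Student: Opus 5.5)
The plan is to deduce the statement directly from \cite[Theorem~1.2]{Thom}, once its centralizer hypothesis has been checked using \cite[Theorem~B]{ALT} as described above. Fix \(\pi:Q\to\U(\M)\) and put \(C=\pi(H)'\cap\M\). The argument has three steps.

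\emph{Step 1 (spectral gap).} Property~\((T)\) of \(H\) gives a finite Kazhdan generating set \(h_1,\dots,h_\ell\) with Kazhdan constant \(\kappa\). Apply the Kazhdan inequality to the conjugation representation \(\rho_H\) on \(L^2(\M,\tau)\), restricted to the orthogonal complement of \(L^2(C)\). This yields
\[
 \max_j\norm{u_jx-xu_j}_2\ge\kappa\norm{x-E_Cx}_2 .
\]
Here we use that \(L^2(C)\) is exactly the space of \(\rho_H\)-invariant vectors, and that \(E_C\) is the orthogonal projection onto it. Bounding the maximum by the sum gives the normalized inequality with \(\gamma=\min(\kappa^2/(4\ell),1/2)\).

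\emph{Step 2 (internality).} \cite[Theorem~B]{ALT} then says that \(C=[B_k]_\omega\) for some unital \(*\)-subalgebras \(B_k\subseteq\Mat_{d_k}(\C)\). This is the centralizer hypothesis required by \cite[Theorem~1.2]{Thom}.

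\emph{Step 3 (normalization along compressors).} Next I would carry out, or cite, the core of Thom's argument. Take \(t\in P_H\). Since \(tHt^{-1}\le H\), we have
\[
 \pi(t)\,C\,\pi(t)^{-1}=\pi(tHt^{-1})'\cap\M\supseteq C .
\]
So \(\operatorname{Ad}\pi(t)\) maps \(C\) into itself, and the real content is to show that this inclusion is an equality. Internality is what makes this work: \(C\) and its image are internal subalgebras of the form \([B_k]_\omega\) and \([v_kB_kv_k^*]_\omega\), with \(v_k\) unitary representatives of \(\pi(t)\). Comparing these as ultraproducts of finite-dimensional algebras, a finite-dimensional rigidity argument should force equality. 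The natural tools are dimension or trace counting of matrix units in the \(B_k\), which is how Thom uses the hypothesis, together with the conditional-expectation estimate \(\norm{x-E_Cx}_2\) from Step 1 to control the difference. I expect this step to be the main obstacle. An inclusion of a von Neumann algebra into a conjugate of itself can be proper in general, so the finite-dimensional structure encoded by internality must be used essentially. My first attempt would be to invoke \cite[Theorem~1.2]{Thom} verbatim rather than reprove it.

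\emph{Conclusion.} Once \(\pi(t)C\pi(t)^{-1}=C\) holds for every \(t\in P_H\), the same holds for \(t^{-1}\) by conjugating back. The set of \(q\in Q\) satisfying \(\pi(q)C\pi(q)^{-1}=C\) is a subgroup, and it contains \(P_H\). It therefore contains \(\langle P_H\rangle\), which equals \(Q\) by infranormality of \(H\). This gives the stated equality for every \(q\in Q\).
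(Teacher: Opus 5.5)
Your proposal follows the paper's route: property~(T) of \(H\) gives the spectral gap of the conjugation action off \(L^2(C)\), \cite[Theorem~B]{ALT} turns this into internality of \(C=\pi(H)'\cap\M\), and the normalization is then taken directly from \cite[Theorem~1.2]{Thom}, so your sketch of Step~3 and the closing subgroup argument only elaborate what that citation already supplies. One small slip: \(\pi(t)C\pi(t)^{-1}\supseteq C\) means that \(\operatorname{Ad}\pi(t)^{-1}\), not \(\operatorname{Ad}\pi(t)\), maps \(C\) into itself, but this does not affect the argument.
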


Thus each \(\pi(q)\) normalizes the relative commutant: conjugation
carries that algebra onto itself. The homomorphism need not be
injective. Property~\((T)\) is required for both \(H\) and \(Q\);
these hypotheses are verified in Section~\ref{sec:cover}.

\section{A Kazhdan pair with strict compression}\label{sec:cover}

We construct a finitely presented group $Q$ on $48$ generators and a
subgroup $H$ generated by $24$ specified elements. Both groups have
property~(T), and $H$ is infranormal in $Q$. We also specify
$t\in Q$ and $h\in H$ such that
\[
 tHt^{-1}<H,\qquad h\notin tHt^{-1}.
\]
These are the hypotheses needed for the obstruction in
Section~\ref{sec:obstruction}. The construction uses the finitely
presented Kazhdan covers of Ershov and Jaikin-Zapirain, while a
Laurent matrix representation proves strict compression. We retain
all generator and relator choices because they determine the binary
linear system in Section~\ref{sec:translation}.

Throughout this section, put
\[
 I=\{1,2,3\},\qquad E=\{(i,j)\in I^2:i\ne j\}.
\]
We call the six pairs in $E$ root positions. For a unital ring $R$,
let $E_{ij}$ be the matrix with entry $1$ at $(i,j)$ and zeros
elsewhere, and set
\[
 e_{ij}(f)=I_3+fE_{ij}\qquad(f\in R,\ (i,j)\in E).
\]
Here $I_3$ is the identity matrix. These elementary matrices generate
$\EL_3(R)$ inside $\GL_3(R)$, the group of invertible $3\times3$
matrices over $R$. In particular, $e_{ij}(f)^{-1}=e_{ij}(-f)$.
We retain the commutator and presentation conventions of
Section~\ref{subsec:conventions}.

\subsection{Finite presentations over \texorpdfstring{$\F_5$}{F5}}

Let $\F_5=\Z/5\Z$. For an integer $d\ge1$, define $\Delta_d$ by the generators
\[
 Y_{ij}(m),\qquad (i,j)\in E,\quad 0\le m\le d,
\]
and the five relation families
\begin{align*}
 Y_{ij}(m)^5&=1 &&\text{(E0)},\\
 [Y_{ij}(m),Y_{i'j'}(m')]&=1
 &&\text{if }\{i,i'\}\cap\{j,j'\}=\varnothing
 \quad\text{(E1)},\\
 [Y_{ij}(m),Y_{jk}(0)]&=Y_{ik}(m)
 &&\text{if }i,j,k\text{ are distinct}\quad\text{(E2)},\\
 [Y_{ij}(0),Y_{jk}(m)]&=Y_{ik}(m)
 &&\text{if }i,j,k\text{ are distinct}\quad\text{(E3)},\\
 [[Y_{ij}(m),Y_{jk}(m')],Y_{i'k'}(m'')]&=1
 &&\text{(E4)}.
\end{align*}
In (E4), the conditions are
\[
 \{i,i'\}\cap\{k,k'\}=\varnothing,
 \qquad j\notin\{i,k\}.
\]
All root indices range over $I$, and each displayed root symbol has
unequal indices. Coefficient labels range independently over
$\{0,\ldots,d\}$ unless fixed to zero. Every allowed instance is
retained, including repetitions and redundant relations. An equation
$u=v$ contributes the relator $uv^{-1}$. These conventions give a
finite presentation on $6(d+1)$ generators. The label $m$ is an
index; in the matrix maps below, label $0$ represents the coefficient $1$.

The following specialization of \cite[Theorem~6.9]{EJZ} supplies
property~(T) for these presented groups. A Kazhdan cover here means
a Kazhdan group equipped with a surjective homomorphism to the target.
The coefficient algebras are associative and need not be commutative.

\begin{proposition}\label{prop:finite-cover}
For every $d\ge1$, the group $\Delta_d$ is finitely presented and
Kazhdan. If an associative unital $\F_5$-algebra $R$ is generated as a unital
algebra by $a_1,\ldots,a_d$, there is a surjective homomorphism
\[
 \Delta_d\longrightarrow\EL_3(R),\qquad
 Y_{ij}(0)\longmapsto e_{ij}(1),\qquad
 Y_{ij}(m)\longmapsto e_{ij}(a_m)\quad(1\le m\le d).
\]
\end{proposition}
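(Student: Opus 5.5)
The plan has three parts: the two finiteness claims, the map to \(\EL_3(R)\), and property~(T). Finite presentation is immediate from the definition. There are \(6(d+1)\) generators. Each of the families (E0)--(E4) ranges over finitely many index tuples: root positions in \(E\) and labels in \(\{0,\ldots,d\}\). So \(\Delta_d\) is given by a finite relator list.

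For the homomorphism, we define it on generators as stated. Label \(0\) goes to \(e_{ij}(1)\), and it is convenient to write \(a_0=1\), so that \(Y_{ij}(m)\mapsto e_{ij}(a_m)\) uniformly. We then check each relator family using the Steinberg identities in \(\EL_3(R)\), which hold for noncommutative \(R\).
\begin{itemize}
\item (E0) follows from \(e_{ij}(f)^5=e_{ij}(5f)=1\) in characteristic \(5\).
\item (E1) follows from \([e_{ij}(f),e_{i'j'}(g)]=1\) when \(j\ne i'\) and \(i\ne j'\). The disjointness condition \(\{i,i'\}\cap\{j,j'\}=\varnothing\) gives exactly this.
\item (E2) and (E3) follow from \([e_{ij}(f),e_{jk}(g)]=e_{ik}(fg)\) with one of \(f,g\) equal to \(1\), so no commutativity of \(R\) is needed.
\item (E4) follows because \([e_{ij}(a_m),e_{jk}(a_{m'})]=e_{ik}(a_ma_{m'})\), and this commutes with \(e_{i'k'}(\cdot)\) under the stated disjointness condition.
\end{itemize}
Surjectivity takes more work. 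Iterating (E2)/(E3)-type commutators with \(e_{jk}(1)\) and \(e_{ij}(1)\) shows the image contains \(e_{ij}(p)\) for every monomial \(p\) in the \(a_m\). For example, \([e_{ij}(a_m),e_{jk}(a_{m'})]=e_{ik}(a_ma_{m'})\) is in the image, and longer monomials follow by induction, moving between root positions with the unit elementary matrices. Additivity \(e_{ij}(f)e_{ij}(g)=e_{ij}(f+g)\) then gives all of \(e_{ij}(R)\), since every element of \(R\) is an \(\F_5\)-combination of monomials. These elements generate \(\EL_3(R)\).

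Property~(T) is the real content, and I would not reprove it. I would invoke \cite[Theorem~6.9]{EJZ}, which gives property~(T) for Steinberg-type groups over finitely generated free associative rings. The plan is to identify \(\Delta_d\) with the presented group in that theorem, specialized to \(\F_5\) and \(n=3\). The step I expect to be the main obstacle is this matching. One must check that (E0)--(E4), with only degree-one generators and unit-coefficient twisting relations plus the "degree-two" commutation relations (E4), really is the finite presentation Ershov and Jaikin-Zapirain use. In their argument it is a group graded by the \(A_2\) root system whose root subgroups are generated by the \(Y_{ij}(m)\), and property~(T) follows from their spectral criterion for graded covers.

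If the presentation differs, I would show \(\Delta_d\) surjects onto their group with finite kernel, or that their group is a quotient with the same relations up to redundancy. Property~(T) passes to quotients, and the retained redundant relations do not change the group. The key points to verify are two. First, each root subgroup \(\langle Y_{ij}(0),\ldots,Y_{ij}(d)\rangle\), together with the relations produced by (E2)--(E4), is nilpotent of bounded class and exponent \(5\), so it is finite. Second, the resulting graded cover satisfies the EJZ angle/spectral condition for \(A_2\). Since Kazhdan groups are finitely generated and \(\Delta_d\) is finitely presented, this completes the proposition.
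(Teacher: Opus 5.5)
For property~(T) you take the paper's route. Both proofs rest entirely on \cite[Theorem~6.9]{EJZ} for $n=3$ over $\F_5$ and the free algebra $\F_5\langle z_1,\ldots,z_d\rangle$, together with recognizing $\Delta_d$ as the cover there.

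Where you genuinely differ is the map to $\EL_3(R)$. The paper gets it by composition: the cited theorem supplies the covering map $\Delta_d\to\EL_3(\F_5\langle z_1,\ldots,z_d\rangle)$, and the surjective algebra map $z_m\mapsto a_m$ induces a surjection of elementary groups, since every entry has a preimage. You instead check (E0)--(E4) directly from the Steinberg identities, correctly keeping the order in $[e_{ij}(f),e_{jk}(g)]=e_{ik}(fg)$. You then prove surjectivity by induction on monomial length through $[e_{ik}(p),e_{kj}(a_m)]=e_{ij}(pa_m)$, followed by additivity. Note that the unit commutators of (E2)/(E3) alone only relocate coefficients; this product step is what does the work. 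Your version makes the second half of the proposition independent of the citation and of the exact form of its covering map. The paper's version is shorter and treats the free algebra as the universal case.

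The matching you single out as the main obstacle takes only a few lines once the theorem is in hand, and the paper does exactly this.
\begin{itemize}
\item The cited theorem imposes a condition on the size of the base field when $n<7$, which $|\F_5|=5$ meets. State this hypothesis explicitly rather than absorbing it into ``specialized to $\F_5$''.
\item The basis of $\F_5$ over its prime field is $\{1\}$, so the theorem's relations specialize to (E0)--(E4).
\item Its family (E5) is vacuous for $n=3$, because distinct $i,k$ have only one intermediate index.
\item Its family (E6) concerns only the base ring $\Z$.
\end{itemize}
Hence the cover is exactly $\Delta_d$, and no fallback is needed.

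As written, your fallback has some slips. Permanence under quotients transfers (T) to $\Delta_d$ only if $\Delta_d$ is a quotient of the cited group, not conversely. Your finite-kernel variant would instead need stability under extensions. Finiteness of $\langle Y_{ij}(0),\ldots,Y_{ij}(d)\rangle$ is immediate from (E0) and from (E1) with $(i',j')=(i,j)$, not from (E2)--(E4). Finally, re-verifying the spectral condition would amount to reproving the cited theorem.
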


\begin{proof}
Apply \cite[Theorem~6.9]{EJZ} with $n=3$, base field $\F_5$, and
the free associative algebra $\F_5\langle z_1,\ldots,z_d\rangle$.
The field has cardinality $5$, as required when $n<7$. Its basis
over the prime field is $\{1\}$, so the defining relations
in that theorem specialize to (E0)--(E4).
Family (E5) is tautological: for distinct $i,k\in I$ there is only
one possible intermediate index. Family (E6) applies only to the
base ring $\Z$. Thus the cover in that theorem is exactly
$\Delta_d$, proving property~(T).

For general $R$, the substitution $z_m\mapsto a_m$ is a surjective
algebra homomorphism. Applied to matrix entries, it induces a
surjection between elementary groups, since every coefficient in
$R$ has a preimage. Composing with the cover of the free algebra
gives the asserted map.
\end{proof}

We use $\Delta_6$ for a normal subgroup of $Q$ and $\Delta_3$ for
$H$. Realizing these subgroups as quotient images of the covers
will establish their property~(T).

\subsection{The presentation of \texorpdfstring{$Q$}{Q}}

Fix the ordered coefficient lists
\[
 M=(0,1,2,3,-1,-2,-3),\qquad M^+=(0,1,2,3).
\]
Membership refers to the underlying sets; the order specifies the
numbering of generators. The group $Q$ has $48=6\cdot7+6$ generators:
\[
 X_{ij}(m)\quad((i,j)\in E,\ m\in M),\qquad
 s_{ab}\quad((a,b)\in E).
\]
Negative labels name distinct generators: $X_{ij}(-c)$ is not the
notation for $X_{ij}(c)^{-1}$. The $s_{ab}$ are called shear
generators because their matrix images will act by shears on
monomial exponents.

First impose the relations of $\Delta_6$, renaming its six nonzero
labels as $1,2,3,-1,-2,-3$:
\begin{align}
 X_{ij}(m)^5&=1,
 \label{eq:q-power}\\
 [X_{ij}(m),X_{i'j'}(m')]&=1
 \quad\text{if }\{i,i'\}\cap\{j,j'\}=\varnothing,\\
 [X_{ij}(m),X_{jk}(0)]&=X_{ik}(m)
 \quad\text{if }i,j,k\text{ are distinct},\\
 [X_{ij}(0),X_{jk}(m)]&=X_{ik}(m)
 \quad\text{if }i,j,k\text{ are distinct},\\
 [[X_{ij}(m),X_{jk}(m')],X_{i'k'}(m'')]&=1.
 \label{eq:q-triple}
\end{align}
In \eqref{eq:q-triple}, require
$\{i,i'\}\cap\{k,k'\}=\varnothing$ and $j\notin\{i,k\}$.
The root indices range over $I$, each root symbol has unequal
indices, and $m,m',m''$ range independently over $M$.

The shear generators satisfy
\begin{align}
 [s_{ab},s_{cd}]&=1
 \quad\text{if }b\ne c\text{ and }a\ne d,
 \label{eq:shear-comm}\\
 [s_{ab},s_{bc}]&=s_{ac}
 \quad\text{if }a,b,c\text{ are distinct},\\
 (s_{12}s_{21}^{-1}s_{12})^4&=1.
 \label{eq:shear-torsion}
\end{align}
Here all pairs indexing generators belong to $E$.
These relations present $\SL_3(\Z)$ under
$s_{ab}\mapsto I_3+E_{ab}$; see \cite[Section~5.6]{Soule}.
We will use this presentation to identify a quotient of $Q$ as a
quotient of $\SL_3(\Z)$.

For the action relations, fix $(a,b),(i,j)\in E$ and let $k$ be
the unique element of $I\setminus\{i,j\}$. For both
$\varepsilon\in\{1,-1\}$, impose
\begin{align}
 s_{ab}^{\varepsilon}X_{ij}(0)s_{ab}^{-\varepsilon}
 &=X_{ij}(0),
 \label{eq:action-zero}\\
 s_{ab}^{\varepsilon}X_{ij}(\sigma c)s_{ab}^{-\varepsilon}
 &=X_{ij}(\sigma c)
 \quad(c\ne b,\ \sigma\in\{1,-1\}),\\
 s_{ab}^{\varepsilon}X_{ij}(\sigma b)s_{ab}^{-\varepsilon}
 &=[X_{ik}(\sigma b),X_{kj}(\varepsilon\sigma a)]
 \quad(\sigma\in\{1,-1\}).
 \label{eq:action-mixed}
\end{align}
Here $c\in I$, and products such as $\varepsilon\sigma a$ are
coefficient labels in $M$.

Finally, let $\nu:I\to I$ be the cyclic successor map
$1\mapsto2\mapsto3\mapsto1$, and define the words
\[
 r_c=(s_{c,\nu(c)}s_{\nu(c),c}^{-1}s_{c,\nu(c)})^2
 \qquad(c\in I).
\]
Impose
\begin{equation}\label{eq:negative-recovery}
 r_cX_{ij}(c)r_c^{-1}=X_{ij}(-c)
 \qquad((i,j)\in E,\ c\in I).
\end{equation}
The words $r_c$ introduce no new generators. Relations
\eqref{eq:q-power}--\eqref{eq:negative-recovery}, with every
allowed index instance retained, define $Q$.

Define
\begin{equation}\label{eq:NH-def}
 \begin{aligned}
 N&=\langle X_{ij}(m):(i,j)\in E,\ m\in M\rangle,\\
 H&=\langle X_{ij}(m):(i,j)\in E,\ m\in M^+\rangle.
 \end{aligned}
\end{equation}
Thus $N$ has $42$ specified generators and $H$ has $24$.
For subsequent word constructions, order root pairs as
$(1,2),(1,3),(2,1),(2,3),(3,1),(3,2)$.
List all $X$ generators in this order of pairs and, within each
pair, in the displayed order of $M$. Then append the six shear
generators in the same order of pairs.

\subsection{Property~(T) and infranormality}

The two covers give property~(T) for $N$ and $H$. The action
relations make $N$ normal and the shears compress $H$; the recovery
relations then establish infranormality.

\begin{proposition}\label{prop:kazhdan-pair}
The subgroup $N$ is normal in $Q$, and $N$, $H$, and $Q$ have
property~(T). Moreover, $H$ is infranormal in $Q$, with
\[
 s_{ab}Hs_{ab}^{-1}\le H\quad((a,b)\in E),\qquad
 Q=\langle H,s_{ab}:(a,b)\in E\rangle.
\]
\end{proposition}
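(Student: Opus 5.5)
The proof will be mostly bookkeeping with the presentation, plus the stability facts for property~(T) recalled in Section~\ref{subsec:normalization}. I take the steps in the following order.

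\emph{Normality of \(N\).} The conjugation relations \eqref{eq:action-zero}--\eqref{eq:action-mixed} are imposed for both signs \(\varepsilon=\pm1\). They show that \(s_{ab}^{\pm1}X_{ij}(m)s_{ab}^{\mp1}\) lies in \(N\) for every \(m\in M\): the right sides are either generators of \(N\) or commutators of generators of \(N\). Since \(Q\) is generated by \(N\) and the shears, \(N\) is normal.

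\emph{Compression of \(H\).} For \(m\in M^+\) the same relations give \(s_{ab}X_{ij}(m)s_{ab}^{-1}\in H\), taking \(\varepsilon=+1\). In the mixed case \(m=b\) with \(\sigma=1\), the image is \([X_{ik}(b),X_{kj}(a)]\), and both labels \(a,b\in I\subset M^+\). Hence \(s_{ab}Hs_{ab}^{-1}\le H\), so each \(s_{ab}\in P_H\).

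\emph{Generation and infranormality.} The word \(r_c\) is a product of shears and their inverses, so it lies in \(\langle P_H\rangle\). By \eqref{eq:negative-recovery}, \(X_{ij}(-c)=r_cX_{ij}(c)r_c^{-1}\) lies in \(\langle H,s_{ab}\rangle\). This subgroup therefore contains all \(48\) generators, which gives \(Q=\langle H,s_{ab}\rangle\). Since \(H\subseteq P_H\) trivially, we get \(\langle P_H\rangle=Q\).

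\emph{Property~(T) for \(N\) and \(H\).} The relations \eqref{eq:q-power}--\eqref{eq:q-triple} are exactly those of \(\Delta_6\) after relabeling. The map \(Y_{ij}(m)\mapsto X_{ij}(m)\) therefore defines a homomorphism \(\Delta_6\to Q\) with image \(N\). Thus \(N\) is a quotient of a Kazhdan group and is Kazhdan by Proposition~\ref{prop:finite-cover}.

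For \(H\) the argument needs more care. I must check that the \(\Delta_3\) relations, with labels in \(M^+\), hold among the generators of \(H\). They do: they are the subfamily of the \(\Delta_6\) relations whose labels lie in \(M^+\), and the reindexing \(0,1,2,3\) matches \(M^+\). So \(\Delta_3\to Q\) is well defined with image \(H\), and \(H\) is Kazhdan.

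\emph{Property~(T) for \(Q\).} Here I use stability under extensions. The quotient \(Q/N\) is generated by the images of the shears. These images satisfy \eqref{eq:shear-comm}--\eqref{eq:shear-torsion}, the Soul\'e presentation of \(\SL_3(\Z)\), so there is a surjection \(\SL_3(\Z)\to Q/N\). The action and recovery relations impose nothing further modulo \(N\), though surjectivity is all that is needed. As a quotient of the Kazhdan group \(\SL_3(\Z)\), the group \(Q/N\) is Kazhdan, and then \(Q\) is Kazhdan as an extension of \(Q/N\) by \(N\).

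\emph{Main obstacle.} I expect the most delicate point to be the label bookkeeping: confirming that every label product \(\varepsilon\sigma a\) in \eqref{eq:action-mixed} stays in \(M\), and that it stays in \(M^+\) when \(\varepsilon=\sigma=1\). This is what makes compression by \(s_{ab}\) (as opposed to \(s_{ab}^{-1}\)) land in \(H\), and I will check it case by case. Everything else follows formally from the presentation.
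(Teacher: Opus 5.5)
Your proposal is correct and follows the paper's proof essentially step for step. Both arguments use the same ingredients:
\begin{itemize}
\item surjections $\Delta_6\to N$ and $\Delta_3\to H$ for property~(T) of $N$ and $H$;
\item the action relations with both signs for normality of $N$;
\item the Soul\'e presentation of $\SL_3(\Z)$ and permanence under extensions for $Q$;
\item the $\varepsilon=1$ action relations, where only the mixed commutator case changes a label, for compression;
\item the recovery relations for generation.
\end{itemize}
Only the order in which you present these steps differs from the paper.
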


\begin{proof}
Relations \eqref{eq:q-power}--\eqref{eq:q-triple} give surjections
$\Delta_6\to N$ and $\Delta_3\to H$. The first uses the label
renaming above, and the second restricts to labels in $M^+$.
Thus $N$ and $H$ have property~(T) by
Proposition~\ref{prop:finite-cover} and permanence under quotients.

For both conjugating signs, relations
\eqref{eq:action-zero}--\eqref{eq:action-mixed} express conjugates
of the generators of $N$ as words in $N$. Hence each shear
normalizes $N$. Since the remaining generators lie in $N$, we have
$N\triangleleft Q$. The quotient $Q/N$ is generated by the shear
images and satisfies \eqref{eq:shear-comm}--\eqref{eq:shear-torsion},
so it is a quotient of $\SL_3(\Z)$. Property~(T) for this group
and permanence under quotients and extensions now give property~(T)
for $Q$; see \cite[Theorem~1.3.4, Example~1.7.4(i), and
Proposition~1.7.6]{BHV}.

With $\varepsilon=1$, the action relations send each generator
of $H$ into $H$. In the only case where a positive label changes,
\[
 s_{ab}X_{ij}(b)s_{ab}^{-1}
 =[X_{ik}(b),X_{kj}(a)]\in H.
\]
Thus every shear belongs to the compression semigroup $P_H$.
Relation~\eqref{eq:negative-recovery} expresses each generator
with a negative label using $H$ and shear words, giving
$Q=\langle H,s_{ab}:(a,b)\in E\rangle$. Since $H\subseteq P_H$,
these compressors generate $Q$ as a group. This proves infranormality.
\end{proof}

\subsection{A Laurent matrix representation and strict compression}

We prove strict compression by detecting subgroup nonmembership
in a matrix representation. Set
\[
 R=\F_5[x_1^{\pm1},x_2^{\pm1},x_3^{\pm1}],\qquad
 R_+=\F_5[x_1,x_2,x_3],
\]
where $x_1,x_2,x_3$ are commuting indeterminates. The Laurent
monomials in $R$ allow integer exponents; those in $R_+$ have
nonnegative exponents. For a column vector
$v=(v_1,v_2,v_3)^{\mathsf T}\in\Z^3$, where $\mathsf T$ denotes
transpose, write $x^v=x_1^{v_1}x_2^{v_2}x_3^{v_3}$.
For $A\in\GL_3(\Z)$,
define an $\F_5$-linear map by
\[
 \alpha_A(x^v)=x^{Av}.
\]
Here $A$ is a local integer action matrix, distinct from the binary
constraint matrix used later. Since exponents add under multiplication,
$\alpha_A$ is an algebra automorphism, with
$\alpha_A\alpha_{A'}=\alpha_{AA'}$ and inverse $\alpha_{A^{-1}}$.
Applying this action to matrix entries gives the semidirect product
\[
 K=\GL_3(R)\rtimes_\alpha\GL_3(\Z),\qquad
 (g,A)(g',A')=(g\alpha_A(g'),AA').
\]
In particular,
\begin{equation}\label{eq:semidirect-conjugation}
 (I_3,A)(g,I_3)(I_3,A)^{-1}=(\alpha_A(g),I_3).
\end{equation}
The identity matrices in the two components are over $R$ and $\Z$,
respectively. Write $T_{ab}=I_3+E_{ab}\in\SL_3(\Z)$; it adds
coordinate $b$ to coordinate $a$ in an integer column vector.

The next representation realizes each signed coefficient label as
a Laurent monomial and each shear as its exponent action.

\begin{proposition}\label{prop:Laurent-quotient}
There is a homomorphism $\psi:Q\to K$ determined by
\begin{align*}
 X_{ij}(0)&\longmapsto(e_{ij}(1),I_3),\\
 X_{ij}(\sigma c)&\longmapsto(e_{ij}(x_c^\sigma),I_3)
 \quad(c\in I,\ \sigma\in\{1,-1\}),\\
 s_{ab}&\longmapsto(I_3,T_{ab}).
\end{align*}
\end{proposition}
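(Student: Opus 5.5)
Since $Q$ is given by generators and relators, the assignment on generators extends to a homomorphism exactly when every relator of \eqref{eq:q-power}--\eqref{eq:negative-recovery} maps to the identity of $K$. I will check the relation families in three groups: those involving only $X$ generators, those involving only shears, and the mixed action and recovery relations.

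\emph{Relations inside $N$.} All $X$ images lie in the normal subgroup $\GL_3(R)\times\{I_3\}\cong\GL_3(R)$. Relations \eqref{eq:q-power}--\eqref{eq:q-triple} are the $\Delta_6$ relations, with labels $0,\pm1,\pm2,\pm3$ sent to $1,x_c^{\pm1}$. They hold because Proposition~\ref{prop:finite-cover} applies to the $\F_5$-algebra $R$ generated by $x_1^{\pm1},x_2^{\pm1},x_3^{\pm1}$: its map $Y_{ij}(0)\mapsto e_{ij}(1)$, $Y_{ij}(m)\mapsto e_{ij}(a_m)$ is a homomorphism, which is the assertion. One can also verify them directly. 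The identities $e_{ij}(f)^5=e_{ij}(5f)=I$, commutation for disjoint index conditions, and $[e_{ij}(f),e_{jk}(g)]=e_{ik}(fg)$ all hold. Relation (E4) holds because $e_{ik}(\cdot)$ commutes with $e_{i'k'}(\cdot)$ when $\{i,i'\}\cap\{k,k'\}=\varnothing$.

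\emph{Shear relations.} The shear images lie in $\{I_3\}\times\SL_3(\Z)$. The relations \eqref{eq:shear-comm}--\eqref{eq:shear-torsion} are Steinberg-type relations satisfied by the $T_{ab}$ in $\SL_3(\Z)$. Concretely, $[T_{ab},T_{cd}]=I$ for $b\ne c$, $a\ne d$, and $[T_{ab},T_{bc}]=T_{ac}$. Moreover, $w=T_{12}T_{21}^{-1}T_{12}$ is the signed permutation matrix acting as rotation by $90^\circ$ on coordinates $1,2$, so $w^4=I$.

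\emph{Action and recovery relations.} By \eqref{eq:semidirect-conjugation}, conjugating $(e_{ij}(f),I_3)$ by $\psi(s_{ab})^{\varepsilon}$ gives $(e_{ij}(\alpha_{T_{ab}^\varepsilon}(f)),I_3)$, where $T_{ab}^{\varepsilon}$ sends $v$ to $v+\varepsilon v_b e_a$. The unit vector $e_c$ is fixed unless $c=b$, so $x_c^{\sigma}$ is fixed for $c\ne b$ and $1$ is always fixed; this gives \eqref{eq:action-zero} and the middle family. For $c=b$ we get $x_b^{\sigma}\mapsto x^{\sigma(e_b+\varepsilon e_a)}=x_b^\sigma x_a^{\varepsilon\sigma}$. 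The right side of \eqref{eq:action-mixed} maps to $[e_{ik}(x_b^\sigma),e_{kj}(x_a^{\varepsilon\sigma})]=e_{ij}(x_b^\sigma x_a^{\varepsilon\sigma})$ by the commutator formula, since $i,k,j$ are distinct. These agree, and the bookkeeping that label $\varepsilon\sigma a$ maps to $x_a^{\varepsilon\sigma}$ is consistent. For \eqref{eq:negative-recovery}, the word $r_c$ maps to $(I_3,w_c^2)$ with $w_c=T_{c\nu(c)}T_{\nu(c)c}^{-1}T_{c\nu(c)}$. By the rotation computation, $w_c^2$ is $-1$ on coordinates $c,\nu(c)$ and $+1$ on the third, so $\alpha_{w_c^2}(x_c)=x_c^{-1}$, matching $X_{ij}(-c)\mapsto e_{ij}(x_c^{-1})$.

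The main obstacle I expect is purely careful sign and index bookkeeping. This concerns the orientation convention in $T_{ab}$ acting on column exponent vectors in \eqref{eq:action-mixed}, and the explicit computation of $w_c$ in the recovery step. I would first compute $T_{12}T_{21}^{-1}T_{12}$ explicitly, then apply it to $e_1$ to confirm the sign.
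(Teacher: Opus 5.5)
Your proposal is correct and matches the paper's proof essentially step for step. In both, the root relations come from Proposition~\ref{prop:finite-cover} applied to $R$ with the six generators $x_c^{\pm1}$, and the shear relations hold in $\SL_3(\Z)$. The action and recovery relations follow from the exponent action, the identity $[e_{ik}(f),e_{kj}(g)]=e_{ij}(fg)$, and the fact that $r_c$ maps to $(I_3,D_c)$ with $\alpha_{D_c}(x_c)=x_c^{-1}$. The only difference is that you check the Steinberg and torsion relations for the $T_{ab}$ by direct matrix computation instead of citing the presentation of $\SL_3(\Z)$, which is harmless.
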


\begin{proof}
The root generator images satisfy
\eqref{eq:q-power}--\eqref{eq:q-triple} by
Proposition~\ref{prop:finite-cover}, applied to $R$ with algebra
generators $x_1,x_2,x_3,x_1^{-1},x_2^{-1},x_3^{-1}$.
The matrices $T_{ab}$ satisfy the shear relations by the cited
presentation of $\SL_3(\Z)$.

For $\varepsilon,\sigma\in\{1,-1\}$, the exponent action gives
\begin{equation}\label{eq:Laurent-shear-action}
 \alpha_{T_{ab}^{\varepsilon}}(x_c^\sigma)=
 \begin{cases}
 x_c^\sigma,&c\ne b,\\
 x_b^\sigma x_a^{\varepsilon\sigma},&c=b.
 \end{cases}
\end{equation}
Constants are fixed. Together with
\eqref{eq:semidirect-conjugation} and the elementary matrix identity
for $f,g\in R$,
\[
 [e_{ik}(f),e_{kj}(g)]=e_{ij}(fg)
 \qquad(i,j,k\text{ distinct}),
\]
this verifies all action relations
\eqref{eq:action-zero}--\eqref{eq:action-mixed}.

On coordinates $c,\nu(c)$, the matrix
$T_{c,\nu(c)}T_{\nu(c),c}^{-1}T_{c,\nu(c)}$ is
$\left(\begin{smallmatrix}0&1\\-1&0\end{smallmatrix}\right)$.
Its square is the diagonal matrix $D_c$ with entries $-1$ in
those two coordinates and $1$ in the remaining coordinate.
Thus the shear word $r_c$ maps to $(I_3,D_c)$, and
$\alpha_{D_c}(x_c)=x_c^{-1}$ verifies
\eqref{eq:negative-recovery}. Every defining relation holds, so
the assignment induces $\psi$.
\end{proof}

Under the shear $T_{12}$, all polynomial coefficients lie in the
proper subring $\F_5[x_1,x_1x_2,x_3]$. This provides an explicit
element outside the compressed subgroup.

\begin{proposition}\label{prop:strict-compression}
With
\[
 t=s_{12}\in Q,\qquad h=X_{12}(2)\in H,
\]
one has $tHt^{-1}<H$ and $h\notin tHt^{-1}$.
\end{proposition}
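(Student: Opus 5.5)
First, the inclusion \(tHt^{-1}\le H\) is already given by Proposition~\ref{prop:kazhdan-pair}, since \(s_{12}\in P_H\). It therefore suffices to show \(h\notin tHt^{-1}\). This also gives properness, because \(h\in H\). I will detect nonmembership using the homomorphism \(\psi:Q\to K\) of Proposition~\ref{prop:Laurent-quotient}.

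The first step is to describe \(\psi(H)\). Each generator \(X_{ij}(m)\) with \(m\in M^+\) maps to \((e_{ij}(f),I_3)\) with \(f\in\{1,x_1,x_2,x_3\}\subseteq R_+\). Hence \(\psi(H)\subseteq \GL_3(R_+)\times\{I_3\}\); in fact \(\psi(H)\subseteq\EL_3(R_+)\times\{I_3\}\).

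The second step is to compute \(\psi(tHt^{-1})\) using \eqref{eq:semidirect-conjugation}. For \(g\in\psi(H)\) we have \(\psi(t)(g,I_3)\psi(t)^{-1}=(\alpha_{T_{12}}(g),I_3)\). By \eqref{eq:Laurent-shear-action}, \(\alpha_{T_{12}}\) maps \(x_1\mapsto x_1\), \(x_2\mapsto x_1x_2\), and \(x_3\mapsto x_3\). It therefore carries \(R_+\) into the subring \(S=\F_5[x_1,x_1x_2,x_3]\), since it sends monomials to monomials and fixes constants. Consequently \(\psi(tHt^{-1})\subseteq\GL_3(S)\times\{I_3\}\): every matrix entry of every element of \(\psi(tHt^{-1})\) lies in \(S\).

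The third step is to check that \(\psi(h)\) falls outside this set. We have \(\psi(h)=(e_{12}(x_2),I_3)\), and its \((1,2)\) entry is \(x_2\). In \(S\), every monomial \(x_1^{a}(x_1x_2)^{b}x_3^{c}\) has \(x_1\)-exponent at least its \(x_2\)-exponent. Because \(R_+\) has the monomial basis, \(S\) is spanned by monomials with this property, and \(x_2\) is not one of them. Thus \(x_2\notin S\), so \(\psi(h)\notin\psi(tHt^{-1})\), and therefore \(h\notin tHt^{-1}\).

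I expect no serious obstacle here. The point that needs care is that \(\alpha_{T_{12}}\) maps the whole of \(R_+\) into \(S\), not just the generators \(x_1,x_2,x_3\). This holds because \(\alpha_{T_{12}}\) is a ring homomorphism and \(S\) is a subring. One also needs that products of matrices over \(S\) stay over \(S\), which is immediate.
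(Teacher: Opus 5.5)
Your proof is correct and follows the paper's argument essentially step for step. You take the inclusion from Proposition~\ref{prop:kazhdan-pair}, then use $\psi$ to show that every entry of $\psi(tHt^{-1})$ lies in $S=\F_5[x_1,x_1x_2,x_3]$, while $\psi(h)$ has the entry $x_2\notin S$ by linear independence of monomials; this is exactly the paper's argument.
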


\begin{proof}
Proposition~\ref{prop:kazhdan-pair} gives $tHt^{-1}\le H$.
The images of the generators of $H$ and their inverses have trivial
second component and matrix entries in $R_+$. Products preserve
this property, so every element of $\psi(H)$ has the same form.
By \eqref{eq:Laurent-shear-action}, all matrix entries in the first
component of $\psi(tHt^{-1})$ therefore lie in
\[
 S=\alpha_{T_{12}}(R_+)=\F_5[x_1,x_1x_2,x_3].
\]
Every monomial occurring in $S$ has the form
$x_1^{u+v}x_2^v x_3^w$ with nonnegative integers $u,v,w$.
Its $x_1$ exponent is at least its $x_2$ exponent. By linear
independence of Laurent monomials, $x_2\notin S$.
But $\psi(h)=(e_{12}(x_2),I_3)$ has entry $x_2$ at position
$(1,2)$ in its first component. Hence
$\psi(h)\notin\psi(tHt^{-1})$, and thus $h\notin tHt^{-1}$.
Since $h\in H$, the inclusion is proper. This argument does not
require $\psi$ to be injective.
\end{proof}

\section{From compression to a central involution}\label{sec:obstruction}

We use the group pair from Section~\ref{sec:cover} to construct a
finitely presented group \(\Lambda\) with a nontrivial central
involution \(J\) that every homomorphism into the unitary group
of a tracial matrix ultraproduct sends to the identity. There are two steps. An
amalgamated double first produces an element of infinite order
with this vanishing property. An HNN extension then converts it
into the required central involution. We prove both steps for a
general pair before specifying the presentation used in
Section~\ref{sec:translation}.

\subsection{The amalgamated double}

For \(H\le Q\), the amalgamated double \(Q*_H Q\) is generated
by two copies of \(Q\), with the corresponding elements of \(H\)
identified. Its universal property says that homomorphisms from
the two factors into a group extend uniquely to the double
exactly when they agree on \(H\).
The normal form theorem embeds both factors in the double and
ensures that a nonempty product of elements outside \(H\),
alternating between the two factors, is nonidentity
\cite[Chapter~IV, Section~2]{LS}. Such a product is called reduced.

We follow the argument in the proof of \cite[Theorem~1.3]{Thom},
using Theorem~\ref{thm:normalization}. We also verify that the
resulting word has infinite order, as required for the next extension.

\begin{proposition}\label{prop:double-obstruction}
Let \(Q\) be finitely presented, let \(H\le Q\) be finitely generated,
and suppose that both groups have property~\((T)\).
Assume \(H\) is infranormal in \(Q\) and choose
\[
 t\in Q,\quad tHt^{-1}<H,\qquad h\in H\setminus tHt^{-1}.
\]
In \(D=Q*_H Q\), write \(t_1,t_2\) for the copies of \(t\),
and identify \(h\) with its common image in the two factors. Set
\[
 w=[h,t_2t_1^{-1}].
\]
Then \(D\) is finitely presented, \(w\) has infinite order, and
every homomorphism \(D\to\U(\M)\) into the unitary group of a
tracial matrix ultraproduct sends \(w\) to the identity.
\end{proposition}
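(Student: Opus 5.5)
The proof has three parts: finite presentability from the presentation of \(Q\), infinite order of \(w\) from the normal form theorem for amalgamated products, and the vanishing of \(\pi(w)\) from Theorem~\ref{thm:normalization}, following \cite[Theorem~1.3]{Thom}.

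\emph{Finite presentability.} Present \(D\) by taking two disjoint copies of a finite presentation of \(Q\). Since \(H\) is finitely generated, fix finitely many generators \(g_1,\dots,g_\ell\) of \(H\), each written as a word in the generators of \(Q\). Add the \(\ell\) relators identifying the first-copy word for \(g_i\) with the second-copy word. The universal property of the amalgamated product shows that this finite presentation defines \(Q*_HQ\).

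\emph{Infinite order of \(w\).} First, \(t\notin H\): if \(t\in H\), then \(tHt^{-1}=H\), contradicting strictness. Also, for \(g\in Q\) we have \(t^{-1}gt\in H\) if and only if \(g\in tHt^{-1}\). Since \(h\notin tHt^{-1}\), both \(t^{-1}ht\) and \(t^{-1}h^{-1}t\) lie outside \(H\). Now conjugate \(w\) by \(t_2\):
\[
 t_2^{-1}wt_2=(t_2^{-1}ht_2)(t_1^{-1}h^{-1}t_1).
\]
This is a product of an element of the second factor outside \(H\) and an element of the first factor outside \(H\). It is therefore cyclically reduced of length two. Consequently, for each \(n\ge1\), its \(n\)th power is a reduced alternating product of length \(2n\). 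By the normal form theorem \cite[Chapter~IV, Section~2]{LS} this power is nontrivial, so \(w\) has infinite order.

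\emph{Vanishing.} Let \(\pi:D\to\U(\M)\) and put \(C=\pi(H)'\cap\M\). Restricting \(\pi\) to the two factors gives homomorphisms \(\pi_1,\pi_2:Q\to\U(\M)\) that agree on \(H\). The key identification is that, for \(g\in H\), the elements \(t_1gt_1^{-1}\) and \(t_2gt_2^{-1}\) of \(D\) coincide, because both equal the image of \(tgt^{-1}\in H\).
\begin{enumerate}
\item Put \(u=\pi(t_2t_1^{-1})\). The identification gives
\[
 u\,\pi(t_1gt_1^{-1})\,u^*=\pi(t_2gt_2^{-1})=\pi(t_1gt_1^{-1})\qquad(g\in H),
\]
so \(u\in\pi(tHt^{-1})'\cap\M\).
\item Conjugation by \(\pi(t_1)\) carries \(C=\pi(H)'\cap\M\) onto \(\pi(tHt^{-1})'\cap\M\).
\item By Theorem~\ref{thm:normalization} applied to \(\pi_1\), this image equals \(C\). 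Hence \(\pi(tHt^{-1})'\cap\M=C\).
\item Combining these, \(u\in C\), so \(u\) commutes with \(\pi(h)\). Therefore \(\pi(w)=[\pi(h),u]=1\).
\end{enumerate}

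The main obstacle is the normal form bookkeeping in the infinite order step. One must check that each syllable lies outside \(H\), which uses both \(t\notin H\) and \(h\notin tHt^{-1}\). One must also cyclically reduce before taking powers, since \(w\) as written begins and ends in the same factor.
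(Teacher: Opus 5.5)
Your proposal is correct and follows the paper's proof in all three parts: the same doubled presentation, the same conjugate $t_2^{-1}wt_2=(t_2^{-1}ht_2)(t_1^{-1}h^{-1}t_1)$ whose powers are reduced alternating products, and the same normalization argument, where the paper shows $c=t_1^{-1}t_2$ centralizes $H$ and then conjugates by $\pi(t_1)$, which is your steps (i)--(iii) read in the opposite direction. The observation $t\notin H$ is not actually needed, since $h\notin tHt^{-1}$ alone places both syllables outside $H$.
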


\begin{proof}
Choose a finite presentation \(Q=\langle S_Q\mid R_Q\rangle\)
and words \(v_1,\ldots,v_s\) in \(S_Q\) representing generators of \(H\).
Take two copies of this presentation and add the relations
\[
 v_j(S_Q^{(1)})=v_j(S_Q^{(2)})\qquad(1\le j\le s),
\]
where \(S_Q^{(i)}\) denotes the generator set in copy \(i\).
Agreement on these generators implies agreement on all of \(H\).
The universal property therefore identifies the resulting finitely
presented group with \(D\). Only finite generation of \(H\) is needed.

Expanding the commutator gives
\[
 w=h t_2t_1^{-1}h^{-1}t_1t_2^{-1},\qquad
 t_2^{-1}wt_2=(t_2^{-1}ht_2)(t_1^{-1}h^{-1}t_1).
\]
Since \(h\notin tHt^{-1}\), we have \(t^{-1}ht\notin H\).
Its inverse \(t^{-1}h^{-1}t\) is also outside \(H\).
Thus the two factors in the conjugate above lie outside \(H\)
and belong to different copies of \(Q\). For every integer
\(k\ge1\), the \(k\)-th power of this conjugate is a reduced product
of \(2k\) alternating factors, so it is nonidentity by the normal
form theorem. Hence \(w\)
has infinite order.

Now let \(\pi:D\to\U(\M)\) be any homomorphism and put
\(C=\pi(H)'\cap\M\). Set \(c=t_1^{-1}t_2\).
For \(a\in H\), compression gives \(tat^{-1}\in H\), so its
copies \(t_1at_1^{-1}\) and \(t_2at_2^{-1}\) coincide in \(D\).
Consequently
\[
 cac^{-1}
 =t_1^{-1}(t_2at_2^{-1})t_1
 =t_1^{-1}(t_1at_1^{-1})t_1=a.
\]
Thus \(c\) commutes with \(H\), and \(\pi(c)\in C\).
Since \(Q,H\) are Kazhdan and \(H\) is infranormal,
Theorem~\ref{thm:normalization} applied to the first factor gives
\(\pi(t_1)C\pi(t_1)^{-1}=C\). Hence
\[
 \pi(t_2t_1^{-1})
 =\pi(t_1ct_1^{-1})=\pi(t_1)\pi(c)\pi(t_1)^{-1}\in C.
\]
This operator commutes with \(\pi(h)\), giving \(\pi(w)=1\).
\end{proof}

\subsection{The central involution}

We follow the construction in the proof of
\cite[Corollary~3.2]{Slofstra} to turn \(w\) into the central
involution needed for the linear system construction.
The relation \(zwz^{-1}=wJ\) will force \(J\) to vanish whenever
\(w\) does. An HNN extension ensures that \(J\) remains
nontrivial in the group.


We use the following special case of an HNN extension.
For a group \(B\), a subgroup \(E\le B\), and an automorphism
\(\alpha:E\to E\), it is the group
\[
 \langle B,z\mid zez^{-1}=\alpha(e)\ (e\in E)\rangle.
\]
Here \(z\) is a new generator, called the stable letter, and
all relations of \(B\) are retained. The HNN normal form theorem
states that the natural map from \(B\) into this group is injective
\cite[Chapter~IV, Section~2]{LS}.

\begin{proposition}\label{prop:central-obstruction}
Let \(D,w\) be as in Proposition~\ref{prop:double-obstruction},
and let \(S_D\) be a finite generating list of \(D\).
Adjoin two generators \(J,z\) and define
\begin{equation}\label{eq:lambda-presentation}
 \Lambda=
 \left\langle D,J,z\ \middle|\
 \begin{array}{l}
 J^2=1,\quad [J,g]=1\ (g\in S_D),\\
 {}[J,z]=1,\quad zwz^{-1}=wJ
 \end{array}\right\rangle .
\end{equation}
Here \(D\) abbreviates a chosen finite presentation.
Then \(\Lambda\) is finitely presented, \(J\) is a nontrivial
central involution, and every homomorphism
\(\Lambda\to\U(\M)\) into the unitary group of a tracial matrix
ultraproduct sends \(J\) to the identity.
\end{proposition}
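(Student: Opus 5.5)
Finite presentation is immediate: \(D\) is finitely presented by Proposition~\ref{prop:double-obstruction}, and \eqref{eq:lambda-presentation} adds two generators and \(|S_D|+3\) relators. Centrality of \(J\) follows because \(J\) commutes with every element of \(S_D\), with \(z\) and with itself, and these generate \(\Lambda\); also \(J^2=1\). The vanishing statement is equally short. Let \(\pi:\Lambda\to\U(\M)\). Restricting to the image of \(D\) gives a homomorphism \(D\to\U(\M)\), so \(\pi(w)=1\) by Proposition~\ref{prop:double-obstruction}. Applying \(\pi\) to \(zwz^{-1}=wJ\) gives \(1=\pi(z)\pi(w)\pi(z)^{-1}=\pi(w)\pi(J)=\pi(J)\). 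This part needs no injectivity.

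The main work is showing \(J\ne1\). I would build \(\Lambda\) in two stages. First set \(B=D\times\langle J\rangle\) with \(\langle J\rangle\cong\Z/2\). The relations \(J^2=1\) and \([J,g]=1\) for \(g\in S_D\) present exactly this direct product, and \(J\ne1\) there. Next let \(E=\langle w\rangle\times\langle J\rangle\le B\). Since \(w\) has infinite order by Proposition~\ref{prop:double-obstruction}, \(E\cong\Z\times\Z/2\).

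Define \(\alpha:E\to E\) by \(\alpha(w^kJ^\epsilon)=w^kJ^{k+\epsilon}\). This is a well defined homomorphism: it is the map \(w\mapsto wJ\), \(J\mapsto J\), and \(wJ\) commutes with \(J\) and \(J\) has order two. It is bijective, with inverse sending \(w\mapsto wJ\) again, since \(\alpha^2(w)=wJ^2=w\). So \(\alpha\) is an automorphism of \(E\).

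The HNN extension of \(B\) by \(\alpha\) with stable letter \(z\) has relations \(zez^{-1}=\alpha(e)\) for all \(e\in E\). Since \(E\) is generated by \(w\) and \(J\), it suffices to impose \(zwz^{-1}=wJ\) and \(zJz^{-1}=J\), that is \([J,z]=1\). These are exactly the remaining relations of \eqref{eq:lambda-presentation}, so this HNN extension coincides with \(\Lambda\).

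By the HNN normal form theorem \cite[Chapter~IV, Section~2]{LS}, \(B\) embeds in \(\Lambda\). Hence \(J\ne1\) in \(\Lambda\), and \(J\) is a nontrivial central involution.

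The only delicate point is the identification in the previous paragraph. It reduces the infinitely many HNN relations to the two listed ones, which relies on the relations \(zez^{-1}=\alpha(e)\) for generators of \(E\) implying them for all of \(E\). This holds because both sides of each relation are homomorphic in \(e\). The infinite order of \(w\) is essential here: if \(w\) had finite order \(n\), then \(\alpha\) would be well defined only when \(J^n=1\), that is, for even \(n\).
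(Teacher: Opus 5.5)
Your proposal is correct and follows the paper's proof essentially step for step. The paper likewise forms $B=D\times C_2$, uses the infinite order of $w$ to identify $E=\langle w,J\rangle$ with $\Z\times C_2$, checks that $w\mapsto wJ$, $J\mapsto J$ is an automorphism, identifies the HNN extension with $\Lambda$, and obtains $J\ne1$ from the HNN normal form theorem and $\pi(J)=1$ from $zwz^{-1}=wJ$ (your closing aside overstates "essential" slightly, since by your own computation an even finite order would also do, but this does not affect the argument).
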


\begin{proof}
Let \(C_2=\Z/2\Z\) and \(B=D\times C_2\).
Identify \(D\) with \(D\times\{0\}\) and put \(J=(1,1)\).
Then \(J\) has order two and commutes with \(D\).
Since \(w\) has infinite order, the map
\[
 \Z\times C_2\longrightarrow E:=\langle w,J\rangle\le B,
 \qquad (k,\epsilon)\longmapsto w^kJ^\epsilon
\]
is an isomorphism. Indeed, it is surjective, and
\(w^kJ^\epsilon=1\) forces \(w^k=1\) and \(\epsilon=0\)
in the two coordinates, hence \(k=0\).

The assignments \(\alpha(w)=wJ\) and \(\alpha(J)=J\)
correspond to
\[
 (k,\epsilon)\longmapsto(k,\epsilon+k),
\]
with the second coordinate computed modulo \(2\).
This map preserves addition and its square is the identity,
so \(\alpha\) is an automorphism of \(E\).
Form the HNN extension of \(B\) with this automorphism and
stable letter \(z\). Its conjugation relations reduce to
\(zwz^{-1}=wJ\) and \(zJz^{-1}=J\): for every \(k\in\Z\)
and \(\epsilon\in C_2\), these imply
\[
 z(w^kJ^\epsilon)z^{-1}
 =(wJ)^kJ^\epsilon
 =w^kJ^{k+\epsilon}
 =\alpha(w^kJ^\epsilon).
\]
This calculation is valid for negative \(k\) because \(w\)
and \(J\) commute. Since \(S_D\) generates \(D\), its commutation
relations with \(J\), together with \(J^2=1\), present
\(B=D\times C_2\) over \(D\). Thus the HNN presentation is
exactly \eqref{eq:lambda-presentation}.

The normal form theorem embeds \(B\) in \(\Lambda\), so \(J\)
remains nontrivial and has order two. The defining relations make
it commute with \(D\) and \(z\), so it is central.
The presentation is finite because \(D\) is finitely presented
and only finitely many generators and relations were added.

Finally, if \(\pi:\Lambda\to\U(\M)\) is a homomorphism,
its restriction to \(D\) sends \(w\) to the identity by
Proposition~\ref{prop:double-obstruction}. Applying \(\pi\)
to \(zwz^{-1}=wJ\) gives
\[
 1=\pi(z)\pi(w)\pi(z)^{-1}=\pi(w)\pi(J)=\pi(J).\qedhere
\]
\end{proof}

\subsection{The specified instance}

We now take \(Q,H\) from Section~\ref{sec:cover}, with
\[
 t=s_{12},\qquad h=X_{12}(2),\qquad w=[h,t_2t_1^{-1}].
\]
Propositions~\ref{prop:kazhdan-pair} and~\ref{prop:strict-compression}
verify all hypotheses above. In the two copies of the presentation
of \(Q\), identify the \(24\) root generators with nonnegative
coefficient labels, which generate \(H\). The \(18\) generators
with negative labels and the six shear generators remain separate
in the two factors. Thus the specified presentation of \(D\)
has \(24+2(18+6)=72\) generators. Adjoining \(J,z\) gives
\(74\) generators for \(\Lambda\).

Retain both copies of every defining relation of \(Q\), with
shared generator names identified, including any repeated relators.
The universal property used in Proposition~\ref{prop:double-obstruction}
shows that this presents \(D\). Add the relations in
\eqref{eq:lambda-presentation}, taking \(S_D\) to be the full
list of \(72\) generators. This is the finite presentation used
in Section~\ref{sec:translation}.

\section{An explicit binary linear system}
\label{sec:translation}

We construct a binary linear system $A\xi=b$ whose solution group
$\Sigma=\Gamma(A,b)$ contains the group $\Lambda$ from
Section~\ref{sec:obstruction}, with $J$ mapped to $J_\Sigma$.
This preserves the nontrivial central involution needed in
Section~\ref{sec:games}. We also determine the dimensions of $A$
and prove that every row has three nonzero entries, every column
has even sum, and $b$ has a single nonzero entry. In particular,
the classical system is inconsistent.

The construction uses two embedding theorems of Slofstra~\cite{Slofstra}.
We first replace the generators of $\Lambda$ by words in involutions,
then apply the wagon wheel construction to obtain the linear system.
We specify the relation list and all row and column indices, so the
formulas below determine $(A,b)$ completely.

\subsection{The numbered presentation and its canonical relation list}

Order the six pairs of distinct elements of $\{1,2,3\}$ as
\[
 (\alpha_1,\ldots,\alpha_6)
   =((1,2),(1,3),(2,1),(2,3),(3,1),(3,2)).
\]
We write $X_{\alpha_p}(c)=X_{ij}(c)$ and $s_{\alpha_p}=s_{ij}$
when $\alpha_p=(i,j)$. A superscript $^{(f)}$ denotes the copy
in factor $f\in\{1,2\}$ of $D=Q*_H Q$. The generators with
coefficient labels $0,1,2,3$ are shared by the two factors;
the remaining generators have separate copies.
Number the generating list $S_\Lambda=(g_1,\ldots,g_{74})$ by
\begin{align}
 g_{4(p-1)+c+1}&=X_{\alpha_p}(c)
       &&(1\le p\le6,\ 0\le c\le3), \label{eq:lambda-positive-numbering}\\
 g_{24+24(f-1)+3(p-1)+c}&=X_{\alpha_p}(-c)^{(f)}
       &&(f\in\{1,2\},\ 1\le p\le6,\ 1\le c\le3),\notag\\
 g_{42+24(f-1)+p}&=s_{\alpha_p}^{(f)}
       &&(f\in\{1,2\},\ 1\le p\le6),
       \label{eq:lambda-shear-numbering}\\
 g_{73}&=J,\qquad g_{74}=z. \label{eq:lambda-last-numbering}
\end{align}
Thus positions $1$ to $24$ are shared, positions $25$ to $48$
belong to the first factor alone, and positions $49$ to $72$
belong to the second factor alone. In particular,
\[
 h=g_3,\qquad t_1=g_{43},\qquad t_2=g_{67},\qquad
 w=g_3g_{67}g_{43}^{-1}g_3^{-1}g_{43}g_{67}^{-1}.
\]
The negative coefficient in $X_{\alpha_p}(-c)^{(f)}$ is part
of the generator's name; it does not denote an inverse.

Let $R_{\mathrm{raw}}$ be the list consisting of both complete copies
of the defining relations of $Q$, translated by
\eqref{eq:lambda-positive-numbering}--\eqref{eq:lambda-shear-numbering},
followed by
\begin{equation}
 J^2,\qquad [J,g_k]\ (1\le k\le72),\qquad [J,z],\qquad
 zwz^{-1}(wJ)^{-1}.
 \label{eq:lambda-raw-relations}
\end{equation}
Each equation $u=v$ is recorded as the word $uv^{-1}$, before
cancellation. Repeated and trivial words are retained at this stage;
their lengths count all signed generator occurrences.
As shown in Section~\ref{sec:obstruction}, identifying the generators
of $H$ in the two copies presents $Q*_H Q$, and adding
\eqref{eq:lambda-raw-relations} gives exactly
\eqref{eq:lambda-presentation}.

Each copy of $Q$ contributes $7\,729$ words of total length $68\,712$.
The additional list has $75$ words of total length
$2+72\cdot4+4+15=309$, since $w$ has length $6$.
Thus $R_{\mathrm{raw}}$ contains
\begin{equation}
 15\,533\ \text{relation words of total length}\ 137\,733.
 \label{eq:raw-lambda-counts}
\end{equation}

We now normalize the relation list. Represent each word by a tuple
of elements of $\{-74,\ldots,-1,1,\ldots,74\}$, where $k$ denotes $g_k$ and
$-k$ denotes $g_k^{-1}$. A tuple is \emph{freely reduced} when it
has no adjacent inverse letters, and \emph{cyclically reduced} when
it is freely reduced and, when nonempty, its first and last letters
are not inverse.
The empty tuple represents the identity. For a reduced word $r$,
write $|r|$ for its length. Normalize the raw list as follows:
\begin{enumerate}
 \item Scan each tuple from left to right with an initially empty stack.
       If the next letter is the inverse of the top letter, remove the
       top letter; otherwise append the next letter.
 \item Repeatedly cancel inverse first and last letters. Discard the
       tuple if it becomes empty.
 \item Choose the lexicographically least tuple among all cyclic
       rotations of the remaining tuple and its inverse.
 \item Remove repetitions and sort the resulting tuples lexicographically.
\end{enumerate}
The inverse of $(a_1,\ldots,a_\ell)$ is $(-a_\ell,\ldots,-a_1)$,
and the cyclic rotations of the original tuple are
$(a_{k+1},\ldots,a_\ell,a_1,\ldots,a_k)$ for $0\le k<\ell$.
Lexicographic order uses the ordinary order on signed integers;
a proper prefix precedes the longer tuple. Denote the resulting
ordered list by $R_0=(r_1,\ldots,r_{N_0})$.

The following lemma shows that normalization preserves the specified
group and the maximal relation defect. We use the defect notation
from Section~\ref{sec:preliminaries}.

\begin{lemma}
\label{lem:canonical-lambda}
The presentation $\langle S_\Lambda\mid R_0\rangle$ presents $\Lambda$
with the specified images of its generators. For every $d\ge1$ and
every homomorphism $f:F(S_\Lambda)\to\U(d)$,
\[
 \operatorname{def}_{R_{\mathrm{raw}}}(f)
 =\operatorname{def}_{R_0}(f).
\]
Every $r_i$ is nonempty and cyclically reduced, and
\begin{equation}
 N_0=12\,863,\qquad \sum_{i=1}^{N_0}|r_i|=118\,095.
 \label{eq:canonical-lambda-counts}
\end{equation}
\end{lemma}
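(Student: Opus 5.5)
The plan is to check that each normalization step (i)--(iv) preserves two things at once: the normal closure of the relator set in \(F(S_\Lambda)\), and, for every unitary assignment \(f\), the multiset of values \(\norm{f(r)-I_d}_{2,d}\) up to deleting zeros and duplicates. Since \(\langle S\mid R\rangle\) depends only on \(\langle\!\langle R\rangle\!\rangle\), the first property gives that \(\langle S_\Lambda\mid R_0\rangle\) presents \(\Lambda\). The second gives \(\operatorname{def}_{R_{\mathrm{raw}}}(f)=\operatorname{def}_{R_0}(f)\), since a maximum is unchanged by deleting zero entries or repeated entries.

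Step (i) is free reduction, which does not change the element of \(F(S_\Lambda)\). Hence the normal closure and the value of \(f(r)\) are both untouched.

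Step (ii) replaces \(r=a r' a^{-1}\) by \(r'\). This is conjugation, so the normal closure is unchanged. The error is also unchanged, by unitary invariance of the normalized Hilbert--Schmidt norm:
\[
 \norm{f(a)f(r')f(a)^{-1}-I_d}_{2,d}=\norm{f(r')-I_d}_{2,d}.
\]
If the tuple becomes empty, \(r\) was trivial in the free group. Its error is then \(0\), and discarding it affects neither the normal closure nor the maximum. I need one caveat here: if every word were trivial, the maximum over \(R_{\mathrm{raw}}\) would be \(0\) while \(\defect_{R_0}\) of the empty list is \(0\) by convention. So equality holds in all cases.

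Step (iii) passes to a cyclic rotation or the inverse. A cyclic rotation is a conjugate, and inversion preserves the normal closure. For the error, \(\norm{U^{-1}-I}_{2,d}=\norm{U^*(I-U)}_{2,d}=\norm{U-I}_{2,d}\). A cyclic rotation of a cyclically reduced word remains cyclically reduced, and so does its inverse. Therefore every \(r_i\) is nonempty and cyclically reduced. Step (iv) only removes duplicates and reorders, which changes neither the normal closure nor the maximum.

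The numerical claims \eqref{eq:canonical-lambda-counts} cannot be derived conceptually. They come from explicitly running the deterministic procedure on the list \(R_{\mathrm{raw}}\) with the counts \eqref{eq:raw-lambda-counts}, either by a computer enumeration or by the Lean normalization function. I regard this as the main practical obstacle. It requires generating all \(15\,533\) raw words exactly as specified, with every index instance and the numbering \eqref{eq:lambda-positive-numbering}--\eqref{eq:lambda-last-numbering}, and then certifying the output size \(12\,863\) and total length \(118\,095\). I would verify these counts by a kernel-checked computation (\texttt{decide} or \texttt{native\_decide}) in the formalization, rather than by hand.
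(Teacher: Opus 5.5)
Your proposal is correct and follows essentially the same route as the paper: each normalization step replaces a relator by the same free-group element, a conjugate, or an inverse, and the counts come from the certified finite enumeration. These replacements preserve the normal closure and, by invariance of $\norm{\cdot}_{2,d}$ under unitary conjugation and inversion, each relator error; rotation and inversion preserve cyclic reduction. The one point the paper states and you leave implicit is that the stack scan in step (i) outputs a freely reduced word, because the stack stays reduced and represents the scanned prefix, and your cyclic-reducedness conclusion after step (ii) relies on this.
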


\begin{proof}
After each scanned prefix, the stack is reduced and represents that
prefix: appending preserves reduction unless the last pair cancels,
in which case popping performs that cancellation. Thus the first step
preserves the element of $F(S_\Lambda)$. Removing the end letters of
$aua^{-1}$ replaces it by a conjugate, as does rotating $uv$ to
$vu=u^{-1}(uv)u$. Conjugation and inversion preserve the normal
closure of a relator. Deleting trivial words or repetitions and
reordering also preserve the collective normal closure. Hence the
two presentations have the same kernel in $F(S_\Lambda)$, so their
identification fixes every generator.

For a unitary matrix $U$, the normalized Hilbert--Schmidt distance
$\norm{U-I_d}_{2,d}$ is unchanged by unitary conjugation or inversion.
The same operations therefore preserve each relation error, and
deleting repetitions or zero errors preserves their maximum.
This proves the defect identity.

After cancellation, the remaining tuple is freely reduced and has
noninverse end letters. Rotation and inversion preserve these cyclic
adjacency conditions, so every retained word is cyclically reduced.
The counts in~\eqref{eq:canonical-lambda-counts} follow by the specified
finite enumeration; Appendix~\ref{app:data} gives its complete length
histogram and reproducibility details. Summing the frequencies gives
$N_0$, and summing length times frequency gives the total length.
The complete list is defined by \lean{ThomGame.Lambda.normalizedRelators}
in the accompanying Lean repository, where its equality with the
checked table and both counts are proved.
\end{proof}

In particular, the square relation for $J$ survives normalization
as the tuple $(-73,-73)$, representing $J^{-2}$.

\subsection{A presentation by involutions}

A \emph{group over $\Z/2\Z$} has a specified central element whose
square is the identity; that element may be trivial. A homomorphism
over $\Z/2\Z$ preserves the specified element, and an embedding
is an injective such homomorphism \cite[Definition~4.1]{Slofstra}.
For a finite set $Y$, let
\[
 F_2(Y)=\langle Y\mid y^2=1\ (y\in Y)\rangle
\]
be the free product of copies of $\Z/2\Z$ indexed by $Y$.
This group notation is distinct from the field $\F_2$.
A presentation by involutions
$\operatorname{Inv}\langle Y\mid R\rangle$ means the quotient of
$F_2(Y)\times\langle J_*\mid J_*^2=1\rangle$ by the listed relators
$R$. Thus the square relations and $[y,J_*]=1$ are implicit.
The letters of $Y$ are called \emph{ordinary}, to distinguish them
from the central letter $J_*$.

Before imposing $R$, every element has a unique reduced form
\[
 J_*^p z_0\cdots z_{\ell-1},\qquad
 p\in\{0,1\},\quad z_j\in Y,
\]
with consecutive ordinary letters distinct. Its \emph{ordinary length}
is $\ell$. For positive ordinary length, it is \emph{cyclically reduced}
when also $z_{\ell-1}\ne z_0$. In this convention, words of ordinary
length one are not cyclically reduced; words of length zero are
$1$ and $J_*$.

We apply \cite[Proposition~4.3]{Slofstra} to
$\Lambda=\langle S_\Lambda\mid R_0\rangle$, using the representative
$g_{73}$ of $J$ and taking every substitution exponent to be $2$.
The theorem allows original generators of finite order.
Put $T=148$ and introduce ordinary involutions
\[
 Y=\{y_1,\ldots,y_T\},\qquad
 u_k=y_{2k-1},\quad v_k=y_{2k}\quad(1\le k\le74).
\]
Define a homomorphism
$\phi:F(S_\Lambda)\to F_2(Y)\times\langle J_*\rangle$ by
\begin{equation}
 \phi(g_k)=y_{2k-1}y_{2k}y_{2k-1}y_{2k},\qquad
 \phi(g_k^{-1})=y_{2k}y_{2k-1}y_{2k}y_{2k-1}.
 \label{eq:involution-substitution}
\end{equation}
Let $\mathcal K$ be the group defined by the ordered presentation
\begin{equation}
 \mathcal I=\operatorname{Inv}\langle
 Y\mid \phi(r_1),\ldots,\phi(r_{N_0}),J_*\phi(g_{73})\rangle.
 \label{eq:involution-presentation}
\end{equation}
Here $\mathcal I$ denotes the presentation and $\mathcal K$ its group.
The original square relation becomes $(v_{73}u_{73})^4=1$.
The appended relation
\[
 J_*y_{145}y_{146}y_{145}y_{146}=1
\]
instead identifies $\phi(g_{73})$ with $J_*$.
By Lemma~\ref{lem:canonical-lambda} and
Proposition~\ref{prop:central-obstruction}, $g_{73}$ represents a
central involution in the source, as required by Slofstra's theorem.
It therefore gives an embedding
\begin{equation}
 \iota_0:\Lambda\longrightarrow\mathcal K,
 \qquad \iota_0(g_k)=(y_{2k-1}y_{2k})^2,
 \qquad \iota_0(J)=J_*.
 \label{eq:first-embedding}
\end{equation}

To apply the wagon wheel theorem, we next verify the condition called
\emph{collegiality} in \cite[Definition~4.6]{Slofstra}.
For a reduced word $R=J_*^p z_0\cdots z_{\ell-1}$, define
\[
 \operatorname{mult}(y;R)=\#\{j:0\le j<\ell,\ z_j=y\}.
\]
Two distinct ordinary generators are \emph{adjacent} if they appear consecutively
in the cyclic ordinary word, including its last and first positions.
A finite presentation by involutions is \emph{collegial} if its explicit
relators are cyclically reduced, exclude $1$, $J_*$, and single ordinary
generators, and satisfy this condition: if a generator has odd
multiplicity in some relator, every generator adjacent to it in any
relator has even multiplicity in every relator.
For our presentation, all ordinary multiplicities will be even.

\begin{lemma}
\label{lem:collegial-input}
The presentation $\mathcal I$ is collegial. Its explicit relators,
in the displayed order and without further rotation, have reduced forms
\begin{equation}
 R_i=J_*^{p_i}y_{s_{i,0}}\cdots y_{s_{i,n_i-1}}
       \quad(1\le i\le N),\qquad N=N_0+1=12\,864.
 \label{eq:wheel-input-words}
\end{equation}
Here $n_i$ is the ordinary length, $p_i\in\{0,1\}$ is the central
exponent, and $s_{i,j}\in\{1,\ldots,T\}$ is the index of the ordinary
letter at position $j$, starting at $0$. These integers $s_{i,j}$
are distinct in meaning from the shear generators $s_{ab}$ of $Q$.
Every $n_i\ge4$, every ordinary multiplicity is even, and
\begin{equation}
 p_i=0\ (i<N),\quad p_N=1,\quad
 (s_{N,0},s_{N,1},s_{N,2},s_{N,3})=(145,146,145,146).
 \label{eq:last-wheel-word}
\end{equation}
Moreover, the total ordinary length is
\begin{equation}
 L:=\sum_{i=1}^{N}n_i
      =4\sum_{i=1}^{N_0}|r_i|+4=472\,384.
 \label{eq:total-wheel-length}
\end{equation}
\end{lemma}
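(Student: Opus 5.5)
The proof will be a direct analysis of the substituted words, working one letter at a time. Write $r_i=(a_1,\ldots,a_\ell)$ with $\ell=|r_i|\ge1$. By Lemma~\ref{lem:canonical-lambda}, $r_i$ is cyclically reduced. By \eqref{eq:involution-substitution}, each letter $g_k^{\pm1}$ becomes an alternating block of length four in the two involutions $u_k,v_k$:
\begin{itemize}
\item $g_k$ becomes $u_kv_ku_kv_k$;
\item $g_k^{-1}$ becomes $v_ku_kv_ku_k$.
\end{itemize}
Each block has two occurrences each of $u_k$ and $v_k$, and its internal consecutive letters are distinct.

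The first step is to show that concatenating the blocks produces no cancellation. Consider two adjacent letters $a_j,a_{j+1}$ of $r_i$, read cyclically so that $a_\ell,a_1$ is included.
\begin{itemize}
\item If they involve different generators, the block boundary joins letters from disjoint pairs $\{u_k,v_k\}$ and $\{u_{k'},v_{k'}\}$, so these letters are distinct.
\item If they involve the same generator, then $a_{j+1}=a_j$ because $r_i$ is cyclically reduced. A block for $g_k$ ends in $v_k$ and the next begins with $u_k$. For $g_k^{-1}$ the block ends in $u_k$ and the next begins with $v_k$. In both cases the letters are distinct.
\end{itemize}
Hence $\phi(r_i)$, read as the word in $Y$ with $p_i=0$, is already in reduced form and is cyclically reduced. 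Its ordinary length is $n_i=4|r_i|\ge4$.

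Every ordinary multiplicity in $\phi(r_i)$ is even: each occurrence of $g_k^{\pm1}$ contributes exactly two copies of $u_k$ and two of $v_k$. The same argument covers the last relator $J_*\phi(g_{73})$. Its ordinary part is $y_{145}y_{146}y_{145}y_{146}$, which is reduced and cyclically reduced since $y_{146}\ne y_{145}$. It gives $p_N=1$, $n_N=4$, and \eqref{eq:last-wheel-word}.

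Collegiality then follows almost formally.
\begin{itemize}
\item Every relator is cyclically reduced with ordinary length at least four, so none equals $1$, $J_*$, or a single ordinary generator.
\item The odd-multiplicity condition holds vacuously, because no generator has odd multiplicity in any relator.
\item Summing lengths gives $L=\sum_i 4|r_i|+4$. Using \eqref{eq:canonical-lambda-counts}, this is $4\cdot118\,095+4=472\,384$, and $N=N_0+1=12\,864$.
\end{itemize}

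The only delicate point is the boundary analysis in the first step, since it is the one place where cyclic reducedness of the $r_i$ from Lemma~\ref{lem:canonical-lambda} is essential. Without it, an adjacent pair $g_kg_k^{-1}$ would produce the collision $v_kv_k$ and cancellation. The remaining steps are bookkeeping.
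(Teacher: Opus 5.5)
Your argument follows the paper's proof step for step. It uses the same boundary analysis: disjoint letter pairs for different generators, and the $v_k\,u_k$ or $u_k\,v_k$ junctions for equal signed letters, with cyclic reducedness of $r_i$ ruling out the one colliding case. It also uses the same even-multiplicity count, the same observation that the adjacency condition is vacuous, and the same length sum $4\cdot118\,095+4$.

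The one step you omit is the paper's closing check that the $N$ explicit relators are pairwise distinct. Without it, your $N=N_0+1=12\,864$ counts only list entries, not distinct relators. This matters because $R_0$ was deliberately deduplicated in the normalization, so one should confirm that $\phi$ does not reintroduce repetitions.

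The check is easy to supply. Each $\phi(r_i)$ is its own reduced form, and reduced forms are unique, so equal relators would be equal words. Such a word parses into consecutive blocks of four letters, and each block determines its signed letter: the pair $\{u_k,v_k\}$ gives $k$, and the first letter gives the sign. Hence distinct $r_i$ give distinct relators. The appended relator $J_*\phi(g_{73})$ differs from all of them because its central exponent is $p_N=1$, while every other relator has exponent $0$.
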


\begin{proof}
Each block in~\eqref{eq:involution-substitution} alternates between
two distinct letters. Blocks from different original generators use
disjoint pairs, so their boundary letters differ. For the same generator,
a positive block begins with $u_k$ and ends with $v_k$, whereas a
negative block begins with $v_k$ and ends with $u_k$. Equal boundary
letters therefore occur exactly for opposite signs, which would give
an adjacent inverse pair in $r_i$. Free reduction excludes this within
the word, and cyclic reduction excludes it at the last and first
blocks. Thus $\phi(r_i)$ is cyclically reduced as displayed and has
ordinary length $4|r_i|$.

Each original signed occurrence contributes twice to each of its two
ordinary letters, so every multiplicity is even. The appended relator
is also cyclically reduced, has ordinary length four, and contains
each of $y_{145},y_{146}$ twice. Every relator therefore has length
at least four, and the adjacency condition is vacuous. This proves
collegiality and~\eqref{eq:last-wheel-word}; summing lengths gives
\eqref{eq:total-wheel-length}.

Finally, parsing a substituted word into successive blocks of length
four recovers its original signed tuple. Distinct $r_i$ therefore give
distinct reduced words. The appended relator has central exponent one,
so differs from all of them. Hence there are exactly $N_0+1$ explicit
relators. This is the even exponent argument used in the proof of
\cite[Corollary~4.8]{Slofstra}.
\end{proof}

\subsection{The rows and columns of the matrix}

An \emph{occurrence} $(i,j)$ specifies position $j$ in relation $R_i$.
Let
\[
 o_i=\sum_{a=1}^{i-1}n_a\quad(1\le i\le N),\qquad
 m=3L,\qquad n=4L+T.
\]
The offset $o_i$ counts positions in the preceding words.
Positions $j=0,\ldots,n_i-1$ are read cyclically: $j+1$ and $j-1$
in subscripts are reduced modulo $n_i$. Offsets and matrix indices
are ordinary integers.

The first $T$ columns correspond to $y_1,\ldots,y_T$.
For each occurrence $(i,j)$, put $q=o_i+j$ and assign four auxiliary
column indices
\begin{equation}
 \begin{split}
 a_{i,j}&=T+4q+1,\qquad b_{i,j}=T+4q+2,\\
 c_{i,j}&=T+4q+3,\qquad d_{i,j}=T+4q+4.
 \end{split}
 \label{eq:auxiliary-column-indices}
\end{equation}
In particular, $b_{i,j}$ is an integer index, distinct from the vector
$b$. Write $\xi=(\xi_1,\ldots,\xi_n)^{\mathsf T}\in\F_2^n$ for
the unknown column vector and $b_r$ for coordinate $r$ of $b\in\F_2^m$.
Here $^{\mathsf T}$ denotes transpose.
Define the following rows over $\F_2$, in this order:
\begin{equation}
 \begin{aligned}
 \xi_{s_{i,j}}+\xi_{a_{i,j}}+\xi_{b_{i,j}}
       &=p_i\mathbf 1_{\{j=0\}},
       &&\text{row }3(o_i+j)+1,\\
 \xi_{b_{i,j}}+\xi_{c_{i,j}}+\xi_{a_{i,j+1}}
       &=0,
       &&\text{row }3(o_i+j)+2,\\
 \xi_{c_{i,j}}+\xi_{d_{i,j}}+\xi_{d_{i,j+1}}
       &=0,
       &&\text{row }3(o_i+j)+3.
 \end{aligned}
 \label{eq:wheel-rows}
\end{equation}
The indicator $\mathbf 1_{\{j=0\}}$ is $1$ at $j=0$ and $0$ elsewhere.
Each listed variable has coefficient $1$; all other entries are zero.
These rules define $A\in\Mat_{m,n}(\F_2)$ and $b$ completely.

Indeed, the intervals $\{o_i,\ldots,o_i+n_i-1\}$ partition
$\{0,\ldots,L-1\}$. Hence the occurrences index all $3L$ rows and
all $4L$ auxiliary columns without omissions or repetitions.
Different auxiliary types $a,b,c,d$ have distinct residues modulo $4$
after subtracting $T$, and none is an ordinary column.
The first two row types therefore contain three distinct columns.
In the third, the $c$ column differs from both $d$ columns, and
$d_{i,j}\ne d_{i,j+1}$ because $n_i\ge4$. Thus every row has exactly
three nonzero entries, with no cancellation in $\F_2$.

\subsection{The solution group and preservation of \texorpdfstring{$J$}{J}}

To apply Slofstra's theorem, we identify the matrix with the incidence
matrix of the wagon wheel hypergraph. Here a \emph{hypergraph} consists
of vertices, edges, and their incidences; an edge may meet more than
two vertices. Regard rows as vertices and columns as edges.
Wheel $i$ has vertex set
\[
 \mathcal V_i=\{(i,j,k):0\le j<n_i,\ k\in\{1,2,3\}\},
\]
with $(i,j,k)$ corresponding to row $3(o_i+j)+k$.
The full vertex set is the disjoint union
$\mathcal V=\bigcup_{i=1}^N\mathcal V_i$,
and the edge set is $\{1,\ldots,n\}$.
An ordinary edge $e\le T$ meets $(i,j,1)$ exactly when $s_{i,j}=e$,
and has no other incidences. The auxiliary incidences are
\begin{align*}
 a_{i,j}&:\ (i,j-1,2),\ (i,j,1),&
 b_{i,j}&:\ (i,j,1),\ (i,j,2),\\
 c_{i,j}&:\ (i,j,2),\ (i,j,3),&
 d_{i,j}&:\ (i,j-1,3),\ (i,j,3).
\end{align*}
Reading off the edges at each vertex gives exactly the three row
supports in~\eqref{eq:wheel-rows}. These are the incidences of the
wagon wheel in \cite[Section~5]{Slofstra}.

A \emph{vertex labelling} assigns an element of $\F_2$ to each vertex.
It is an \emph{$\mathcal I$-labelling} if its sum on $\mathcal V_i$
is $p_i$. Our vector $b$ gives such a labelling: on wheel $i$,
only $(i,0,1)$ can have a nonzero label, and that label is $p_i$.

Using the conventions of Section~\ref{subsec:statement-language},
the solution group $\Sigma=\Gamma(A,b)$ has generators
$x_1,\ldots,x_n,J_\Sigma$ and relations
\begin{equation}
 \begin{gathered}
 x_e^2=1\quad(1\le e\le n),\qquad J_\Sigma^2=1,\qquad
 [x_e,J_\Sigma]=1\quad(1\le e\le n),\\
 {}[x_e,x_f]=1\quad\text{if $e<f$ belong to the support of one row},\\
 x_{e_1}x_{e_2}x_{e_3}=J_\Sigma^{b_r}
 \quad\text{if row $r$ has support $e_1<e_2<e_3$}.
 \end{gathered}
 \label{eq:explicit-solution-presentation}
\end{equation}
The support is the set of columns with nonzero entries. Each unordered
pair of distinct columns occurring together in a row gives one
commutation relation, even if it occurs in several rows. We use increasing order in each row product;
the commutation relations make its value independent of this order.

For a collegial presentation, \cite[Theorem~5.1]{Slofstra} embeds its
group in the solution group of its wagon wheel, sending each ordinary
generator to its edge generator and preserving the central involution.
Definition~5.2 and Lemma~5.3 of that paper allow any
$\mathcal I$-labelling. To see why this includes our specified $b$,
observe that the auxiliary graph on each wheel is connected: the
$d$ edges form a cycle through the vertices with third coordinate $3$,
and the $c$ and $b$ edges connect the other vertices to that cycle.
Two admissible labellings differ on an even number of vertices in each
wheel. Pair these vertices and join each pair by an auxiliary path.
The sum of its incidence columns toggles just the two endpoints,
since every internal vertex occurs twice.

Changing a labelling by one auxiliary column is implemented by
$x_e\mapsto J_\Sigma x_e$ for that column, fixing all other generators.
Centrality and $J_\Sigma^2=1$ preserve the square and commutation
relations, while each affected row product gains precisely the factor
$J_\Sigma$ required by its changed label. The substitution is its own
inverse. Composing these isomorphisms along the paths therefore realizes
any admissible change of labelling while fixing every ordinary edge
generator and $J_\Sigma$.

The two embedding theorems now give the required map from $\Lambda$
to the solution group of our specific matrix.

\begin{proposition}
\label{prop:solution-embedding}
There is an injective homomorphism
\[
 \iota:\Lambda\longrightarrow\Sigma,\qquad
 \iota(g_k)=(x_{2k-1}x_{2k})^2\quad(1\le k\le74),
 \qquad \iota(J)=J_\Sigma.
\]
In particular, $J_\Sigma\ne1$.
\end{proposition}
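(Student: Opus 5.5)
The embedding $\iota$ will be obtained as a composition $\iota=\iota_1\circ\iota_0$ of two maps over $\Z/2\Z$, followed by a relabelling isomorphism. The first map is the embedding $\iota_0:\Lambda\to\mathcal K$ from~\eqref{eq:first-embedding}. It comes from \cite[Proposition~4.3]{Slofstra}, applied to $\langle S_\Lambda\mid R_0\rangle$. Lemma~\ref{lem:canonical-lambda} guarantees that this presentation presents $\Lambda$ with the specified generator images, and Proposition~\ref{prop:central-obstruction} guarantees that $g_{73}$ represents a central involution. Hence $\iota_0$ is injective, sends $g_k\mapsto(y_{2k-1}y_{2k})^2$, and sends $J\mapsto J_*$.

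Next I invoke \cite[Theorem~5.1]{Slofstra}, which Lemma~\ref{lem:collegial-input} makes applicable because $\mathcal I$ is collegial. It yields an embedding over $\Z/2\Z$ of $\mathcal K$ into the solution group $\Gamma(A,b')$ of the wagon wheel hypergraph of $\mathcal I$. Here $b'$ is the particular $\mathcal I$-labelling used by Slofstra. This embedding sends $y_e\mapsto x_e$ for $e\le T$ and $J_*\mapsto J$.

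I must first check that our $(A,b)$ is this wagon wheel instance. For the matrix $A$, the incidence description above shows that the row supports~\eqref{eq:wheel-rows} coincide with the wheel incidences of \cite[Section~5]{Slofstra}. The ordinary edges attach only at the vertices $(i,j,1)$, in the cyclic order of $R_i$, and no rotation of the relators is performed. For the right-hand side, our $b$ is an $\mathcal I$-labelling, because its only nonzero entry within wheel $i$ is $p_i$ at $(i,0,1)$. I also need the relator list of $\Gamma(A,b)$ in~\eqref{eq:explicit-solution-presentation} to agree with the standard solution group presentation. This holds because the commutation relations within rows and the increasing-order row products define the same group.

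If $b'\ne b$, I compose with the relabelling isomorphism $\Gamma(A,b')\to\Gamma(A,b)$ constructed just before the proposition. It is built from substitutions $x_e\mapsto J_\Sigma x_e$ along auxiliary paths, which exist because each wheel's auxiliary graph is connected. Two admissible labellings differ on an even number of vertices in each wheel, so those vertices can be paired off along such paths. This isomorphism fixes every ordinary generator $x_e$ with $e\le T$, and it fixes $J_\Sigma$.

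The composite is injective because each factor is. On generators it gives $\iota(g_k)=(x_{2k-1}x_{2k})^2$, since $2k-1,2k\le T=148$ are ordinary columns, and it gives $\iota(J)=J_\Sigma$. Because $J\ne1$ in $\Lambda$ by Proposition~\ref{prop:central-obstruction}, injectivity gives $J_\Sigma\ne1$.

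The main obstacle is the bookkeeping that matches our explicit indices with Slofstra's construction. Specifically, one must confirm three things. First, the wagon wheel of \cite{Slofstra} for a relator of length $n_i$ has exactly the $3n_i$ vertices and the $a,b,c,d$ edge pattern listed, with the cyclic wraparound $j\pm1$ modulo $n_i$. Second, relators containing $J_*$ are handled by the labelling $p_i$ rather than by an extra edge. Third, Theorem~5.1 there permits an arbitrary admissible labelling; if it does not, the relabelling step above covers the difference. I would first check these points against Definition~5.2 and Lemma~5.3 of \cite{Slofstra}. I would then verify the generator-level formula for the composite, which is straightforward once the identifications are fixed.
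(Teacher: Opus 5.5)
Your proposal is correct and follows essentially the same route as the paper. Like the paper, you compose $\iota_0$ with the wagon-wheel embedding of \cite[Theorem~5.1]{Slofstra}, justified by collegiality (Lemma~\ref{lem:collegial-input}) and the incidence and labelling checks, and deduce $J_\Sigma\ne1$ from injectivity and $J\ne1$; the only cosmetic difference is that you make the path-substitution isomorphism $x_e\mapsto J_\Sigma x_e$ an explicit third factor, whereas the paper absorbs it into the remark before the proposition that any admissible $\mathcal I$-labelling is allowed.
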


\begin{proof}
Lemma~\ref{lem:collegial-input} verifies collegiality, and the incidence
and labelling checks above identify $(A,b)$ with an allowed wagon wheel.
Slofstra's theorem therefore gives an embedding
\[
 \iota_1:\mathcal K\longrightarrow\Sigma,
 \qquad y_e\longmapsto x_e\ (1\le e\le T),
 \qquad J_*\longmapsto J_\Sigma.
\]
Compose it with $\iota_0$ from~\eqref{eq:first-embedding}.
The composite is injective and has the stated images.
Since $J\ne1$ by Proposition~\ref{prop:central-obstruction},
its image $J_\Sigma$ is nontrivial.
\end{proof}

\subsection{Dimensions and the parity obstruction}

The word lengths determine the dimensions of the system. Its column
parities also give a direct proof of classical inconsistency.
Write $e_r\in\F_2^m$ for the standard basis vector with its sole
nonzero coordinate at $r$. A column has \emph{even sum} if its number
of nonzero entries is even, equivalently if its sum in $\F_2$ is zero.

\begin{proposition}
\label{prop:matrix-size}
The construction gives
\[
 A\in\Mat_{1\,417\,152,\;1\,889\,684}(\F_2),\qquad
 b=e_{1\,417\,141},\qquad
 \#\{(r,e):A_{r,e}=1\}=4\,251\,456.
\]
Every row has exactly three nonzero entries and every column has even
sum. Consequently $A\xi=b$ has no solution over $\F_2$.
\end{proposition}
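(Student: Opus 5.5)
The proof is a direct count from the construction in \eqref{eq:wheel-rows}, followed by a parity argument.

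\emph{Dimensions.} By Lemma~\ref{lem:collegial-input}, $L=472\,384$, and we have $T=148$. Hence
\[
 m=3L=1\,417\,152,\qquad n=4L+T=1\,889\,682+2=1\,889\,684 .
\]
The check is $4\cdot472\,384=1\,889\,536$, and adding $148$ gives $1\,889\,684$. The discussion after \eqref{eq:wheel-rows} already shows that every row has exactly three distinct columns. The number of nonzero entries is therefore $3m=4\,251\,456$.

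\emph{The vector $b$.} By \eqref{eq:wheel-rows}, the only rows with nonzero right-hand side are the rows $3o_i+1$ with $p_i=1$. By \eqref{eq:last-wheel-word}, this happens only for $i=N$. Since $n_N=4$, we get $o_N=L-4$, so the unique nonzero entry sits in row
\[
 3(L-4)+1=3L-11=1\,417\,141 ,
\]
which gives $b=e_{1\,417\,141}$.

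\emph{Column sums.} We count incidences from the hypergraph description.
\begin{itemize}
\item An ordinary column $e\le T$ meets exactly the rows $3(o_i+j)+1$ with $s_{i,j}=e$. Its weight is therefore $\sum_i\operatorname{mult}(y_e;R_i)$. Each multiplicity is even by Lemma~\ref{lem:collegial-input}, so the sum is even.
\item Each auxiliary column $a,b,c,d$ meets exactly two vertices, as listed. These are distinct: for $a_{i,j}$ and $d_{i,j}$ they differ in the index $j-1\ne j$ modulo $n_i\ge4$, and for $b$ and $c$ they differ in the third coordinate. Each auxiliary column therefore has weight two.
\end{itemize}
The step needing most care is confirming that the cyclic index conventions produce exactly these two incidences with no coincidences. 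For example, $a_{i,j+1}$ appears in row $3(o_i+j)+2$, and the wraparound $j=n_i-1$ gives $a_{i,0}$, which is consistent with the list.

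\emph{Inconsistency.} Let $\mathbf 1\in\F_2^m$ be the all-ones row vector. Since every column of $A$ has even sum, $\mathbf 1A=0$. On the other hand, $\mathbf 1b=1$. If $A\xi=b$ had a solution, then
\[
 0=(\mathbf 1A)\xi=\mathbf 1b=1,
\]
a contradiction. Hence $A\xi=b$ has no solution over $\F_2$.
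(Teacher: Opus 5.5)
Your proposal is correct and follows the paper's proof essentially step for step. It obtains the dimensions from $L$ and $T$, the count $3m$ from three distinct entries per row, the single nonzero entry of $b$ at row $3(L-4)+1$, even column sums from even multiplicities and from two distinct endpoints for each auxiliary column, and the contradiction from pairing with the all-ones vector. Your distinctness argument for $a_{i,j}$ via cyclic positions, rather than third coordinates, is equally valid, and the intermediate expression $1\,889\,682+2$ is a harmless stray in an otherwise correct computation.
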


\begin{proof}
By~\eqref{eq:total-wheel-length}, there are $L=472\,384$ ordinary
occurrences and $T=148$ ordinary columns. The row construction gives
\[
 m=3L=1\,417\,152,\qquad n=4L+T=1\,889\,684.
\]
Each row has three distinct nonzero entries, so their total number is
$3m=4\,251\,456$. By~\eqref{eq:last-wheel-word}, only the last wheel
has nonzero central exponent, and its length is four.
Thus $o_N=L-4$, and~\eqref{eq:wheel-rows} gives the unique nonzero
coordinate of $b$ at
\[
 3o_N+1=3(L-4)+1=1\,417\,141.
\]

For an ordinary column $e\le T$, each occurrence of $y_e$ contributes
one entry in a distinct first row. Its integer column sum is therefore
$\sum_{i=1}^N\operatorname{mult}(y_e;R_i)$, which is even by
Lemma~\ref{lem:collegial-input}. Each auxiliary column has exactly
two incident vertices, as listed above. The endpoints are distinct:
for $a,b,c$ they have different third coordinates, and for $d$ they
have different cyclic positions since $n_i\ge4$. These columns also
have even sum.

Let $\boldsymbol{1}_m=(1,\ldots,1)^{\mathsf T}\in\F_2^m$.
The column parities and the single nonzero entry of $b$ imply
\[
 \boldsymbol{1}_m^{\mathsf T}A=0,\qquad
 \boldsymbol{1}_m^{\mathsf T}b=1.
\]
If $A\xi=b$, multiplication by $\boldsymbol{1}_m^{\mathsf T}$ would
give $0=1$ in $\F_2$, a contradiction.
\end{proof}

\section{Approximation and the values of the game}\label{sec:games}

We now prove that the game constructed in Section~\ref{sec:translation}
separates $C_{qa}$ from $C_{qc}$. The distinguished involution
$J_\Sigma$ is nontrivial in the solution group, which gives a perfect
commuting strategy. Its image nevertheless tends uniformly to the
identity in approximate matrix representations, forcing the quantum
value below one. We then compute the classical value exactly and
complete the proofs of the main theorems.

\subsection{The question distribution}

We retain the game $G=G(A,b)$ and the strategy models from
Section~\ref{subsec:statement-language}. Recall that
$V_r=\{e_{r,1}<e_{r,2}<e_{r,3}\}$ is the support of row $r$.
Let
\[
 \mathcal E=\{(r,e):A_{re}=1\},\qquad
 |\mathcal E|=3m=4\,251\,456,
\]
and write $\operatorname{pos}_r(e)=j$ when $e=e_{r,j}$.
For a correlation $p$, its success probability is
\begin{equation}\label{eq:success-functional}
 \omega(G,p)
 =\frac1{3m}\sum_{(r,e)\in\mathcal E}
   \sum_{\substack{a\in\F_2^3,\,\beta\in\F_2\\
          a_1+a_2+a_3=b_r\\
          a_{\operatorname{pos}_r(e)}=\beta}}
       p(a,\beta\mid r,e).
\end{equation}
The parity equality in the inner sum is in $\F_2$.

The approximation criterion used below is stated for uniform sampling
of all $mn$ row and column pairs. We therefore compare that convention
with our uniform incidence distribution. Call an Alice answer
\emph{legal} if it has the required row parity. For each row, fix one
legal answer and replace every illegal answer by it. This local change
cannot lower incidence success, because an illegal answer always loses.
It preserves every strategy model: for POVMs, sum the effects assigned
to the same new answer. The operation is linear and continuous on
correlation tables, so it also preserves $C_{qa}$.

For a strategy with legal Alice answers, let $P_{\rm inc}$ be its
success probability in our game and $P_{\rm all}$ its success probability
when all row and column pairs are sampled uniformly, with pairs outside
$\mathcal E$ accepted automatically. If $\ell_{r,e}$ is its conditional
losing probability at incidence $(r,e)$, then
\[
 1-P_{\rm inc}=\frac1{3m}\sum_{(r,e)\in\mathcal E}\ell_{r,e},
 \qquad
 1-P_{\rm all}=\frac1{mn}\sum_{(r,e)\in\mathcal E}\ell_{r,e}.
\]
Consequently
\begin{equation}\label{eq:distribution-comparison}
 1-P_{\rm all}=\frac3n(1-P_{\rm inc}).
\end{equation}
Legal answers also suffice for the convention sampling all pairs.
Since $n$ is fixed, a sequence approaches success one in one convention
exactly when it does in the other. Perfect strategies likewise coincide.

In the quantum model, we allow arbitrary POVMs on finite dimensional
spaces. After the
replacement above, Alice has four legal outcomes on each row. Applying
\cite[Lemma~4.1]{SV} to these POVMs and Bob's binary POVMs gives
simultaneous projective dilations on larger finite dimensional spaces,
preserving every correlation entry and the legal outcome sets. Thus the use of projective
measurements in the cited approximation criterion does not restrict
our quantum model.

\subsection{Uniform triviality and the quantum value}

The embedding from Section~\ref{sec:translation} transfers the
obstruction for $J$ to $J_\Sigma$. Lemma~\ref{lem:ultraproduct}
then expresses this obstruction as a tolerance independent of the
matrix dimension. Let $\langle S_\Sigma\mid R_\Sigma\rangle$ be the
finite presentation in~\eqref{eq:explicit-solution-presentation},
with each equation $u=v$ recorded as $uv^{-1}$. As before,
$f(J_\Sigma)$ denotes evaluation of its distinguished generator.

\begin{proposition}\label{prop:uniform-solution}
The element $J_\Sigma$ is nontrivial. For every $\eta>0$, there is
$\delta>0$ such that, for every integer $d\ge1$ and homomorphism
$f:F(S_\Sigma)\to\U(d)$,
\[
 \defect_{R_\Sigma}(f)\le\delta
 \quad\Longrightarrow\quad
 \norm{f(J_\Sigma)-I_d}_{2,d}<\eta.
\]
\end{proposition}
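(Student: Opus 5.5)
The plan is to combine Proposition~\ref{prop:solution-embedding}, Proposition~\ref{prop:central-obstruction}, Remark~\ref{rem:embedding-direction}, and Lemma~\ref{lem:ultraproduct}. Nontriviality of \(J_\Sigma\) is already recorded in Proposition~\ref{prop:solution-embedding}: the embedding \(\iota\) sends \(J\) to \(J_\Sigma\), and \(J\ne1\) in \(\Lambda\). So the work lies in the uniform estimate.

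First, I show that every homomorphism \(\rho:\Sigma\to\U(\M)\), into an arbitrary tracial matrix ultraproduct \(\M\), satisfies \(\rho(J_\Sigma)=1\). Compose \(\rho\) with \(\iota\) to obtain \(\rho\circ\iota:\Lambda\to\U(\M)\). Proposition~\ref{prop:central-obstruction} then gives \((\rho\circ\iota)(J)=1\), and \(\iota(J)=J_\Sigma\) yields \(\rho(J_\Sigma)=1\). This is exactly the transfer in Remark~\ref{rem:embedding-direction}. No approximate representation of \(\Lambda\) needs to be extended, because we only pull exact homomorphisms back along \(\iota\).

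Second, I apply Lemma~\ref{lem:ultraproduct} to the finite presentation \(\langle S_\Sigma\mid R_\Sigma\rangle\) of \eqref{eq:explicit-solution-presentation}, with \(v\) the single generator \(J_\Sigma\in F(S_\Sigma)\). The presentation is finite: there are finitely many generators, and the relators are the squares, the commutators with \(J_\Sigma\), one commutator per unordered pair in a row support, and one product relation per row. Homomorphisms \(P\to\U(\M)\) in the lemma are precisely homomorphisms \(\Sigma\to\U(\M)\), since \(\Sigma\) is defined by this presentation. Condition (i) of the lemma therefore holds by the first step. Condition (ii) is then exactly the displayed implication, with \(\defect_{R_\Sigma}(f)\le\delta\) meaning \(f\) is a \(\delta\)-representation in the sense of Definition~\ref{def:approx}.

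The main obstacle is already absorbed into earlier results, namely the normalization theorem and the embedding. The only point needing care here is that the relator list used in Lemma~\ref{lem:ultraproduct} must be the same list \(R_\Sigma\) that appears in the statement. Any finite presentation of \(\Sigma\) would give some \(\delta\), but the statement is for this specific \(R_\Sigma\), and the lemma is applied directly to it. The uniformity in \(d\) is supplied by the lemma, since its ultraproduct argument allows arbitrary dimension sequences \(d_k\).
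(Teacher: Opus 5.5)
Your proposal is correct and follows the paper's proof essentially step for step. You get nontriviality from the embedding $\iota$ of Proposition~\ref{prop:solution-embedding}, kill $J_\Sigma$ in every ultraproduct homomorphism by applying Proposition~\ref{prop:central-obstruction} to $\theta\circ\iota$, and then apply the implication (i)$\Rightarrow$(ii) of Lemma~\ref{lem:ultraproduct} to the specified presentation $\langle S_\Sigma\mid R_\Sigma\rangle$.
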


\begin{proof}
Proposition~\ref{prop:solution-embedding} gives an embedding
$\iota:\Lambda\to\Sigma$ with $\iota(J)=J_\Sigma\ne1$.
For any tracial matrix ultraproduct $\M$ and homomorphism
$\theta:\Sigma\to\U(\M)$, Proposition~\ref{prop:central-obstruction}
applied to $\theta\circ\iota$ gives
\[
 \theta(J_\Sigma)=(\theta\circ\iota)(J)=1.
\]
Since this holds for every $\M$ and $\theta$,
Lemma~\ref{lem:ultraproduct} gives the stated uniform tolerance
for the word $J_\Sigma$ in the specified presentation.
\end{proof}

To rule out quantum value one, we use the following criterion of
Slofstra and Vidick \cite[Corollary~5.3]{SV}, restated using the
defect from Definition~\ref{def:approx}. It requires one fixed
positive separation from the identity while the relation errors
tend to zero.

\begin{proposition}\label{thm:SV}
For a binary linear system game with solution group
$\Gamma(A,b)=\langle S\mid R\rangle$ in its standard presentation
and distinguished generator $J$, the quantum value is one if and
only if there are $\eta>0$, positive integers $d_k$, and
homomorphisms $f_k:F(S)\to\U(d_k)$ such that
\[
 \defect_R(f_k)\longrightarrow0,
 \qquad
 \norm{f_k(J)-I_{d_k}}_{2,d_k}\ge\eta
 \quad(k\ge1).
\]
\end{proposition}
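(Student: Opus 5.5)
This proposition restates \cite[Corollary~5.3]{SV}. My plan is to reduce it to that corollary by checking that the conventions match, rather than to reprove the Slofstra--Vidick machinery. Slofstra and Vidick work with games that sample all row and column pairs uniformly and with projective measurements. Their corollary states that $\omega_q=1$ exactly when the solution group admits approximate representations with vanishing defect in which $J$ stays at a fixed positive distance from the identity. The two conventions were reconciled above. Equation~\eqref{eq:distribution-comparison} shows that success probability one is approached in the incidence convention exactly when it is approached in the all-pairs convention. Replacing illegal answers by legal ones, followed by \cite[Lemma~4.1]{SV}, shows that projective finite dimensional strategies suffice. So the quantum value equals one in our model if and only if it equals one in theirs.

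The direction from approximate representations to quantum value one works as follows. Given $f_k$ with $\defect_R(f_k)\to0$ and $\norm{f_k(J)-I}_{2}\ge\eta$, first split off the $-1$ eigenspace of an approximately involutive $f_k(J)$. The condition $\norm{f_k(J)-I}_2\ge\eta$ forces that eigenspace to carry normalized trace bounded below, roughly $\eta^2/4$. Then twist by a character so that $J$ is approximately $-I$ on a subspace of proportional dimension. I expect to use a standard stability fact: approximate commuting involutions can be corrected to exact ones on the relevant pieces up to small $\norm{\cdot}_2$ error. From this one builds the strategy in which Alice measures the commuting triple $f_k(x_{e_{r,1}}),f_k(x_{e_{r,2}}),f_k(x_{e_{r,3}})$ and Bob measures $\overline{f_k(x_e)}$ on the maximally entangled state. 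The losing probability is then controlled by the relator defects, which tend to zero.

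The converse direction takes a sequence of strategies with success tending to one. The standard rigidity argument of \cite{SV} extracts from Alice's and Bob's projective observables unitary matrices for the $x_e$, compressed to the support of the state's reduced density and weighted by that state. The key point is that these satisfy the relators up to $o(1)$ in a state-dependent $2$-norm. I then convert to the normalized trace by the reduction to the maximally entangled case, which is the technical heart of \cite{SV}. Finally, I show that the parity constraint forces $J$ to be represented approximately by $-I$, which gives the positive $\eta$.

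I expect the main obstacle to be this last passage from a general state to the normalized trace with a \emph{uniform} $\eta$. Here I would cite the argument of \cite[Section~5]{SV} verbatim rather than reprove it. The only thing to verify is that their standard presentation of $\Gamma(A,b)$ coincides with \eqref{eq:explicit-solution-presentation} up to the relator normalizations. Those normalizations change the defect by at most a constant factor, which preserves both conditions in the statement.
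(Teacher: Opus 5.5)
Your proposal is correct and follows the paper's route: the paper does not reprove this criterion but quotes \cite[Corollary~5.3]{SV}, relying on the same reconciliations you invoke (replacing illegal answers together with \eqref{eq:distribution-comparison}, and the projective dilation of \cite[Lemma~4.1]{SV}), so your sketches of the two directions are supplementary rather than required. Your remark that changing relator conventions alters the defect only by a bounded factor also matches the paper's presentation comparison, which the formalization records with an error factor of at most four.
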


The norms in this criterion are normalized Hilbert--Schmidt norms,
and the dimensions $d_k$ need not be bounded. Equation
\eqref{eq:distribution-comparison} makes the criterion applicable
to our incidence distribution. Proposition~\ref{prop:uniform-solution}
excludes exactly the sequence it would require.

\begin{corollary}\label{cor:quantum-strict}
For the specified game $G$,
\[
 \omega_q(G)=\omega_{qa}(G)<1.
\]
\end{corollary}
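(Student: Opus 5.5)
The plan is a proof by contradiction that combines Proposition~\ref{thm:SV} with Proposition~\ref{prop:uniform-solution}. The equality \(\omega_q(G)=\omega_{qa}(G)\) was already noted in Section~\ref{subsec:statement-language}: the payoff is linear and continuous on correlation arrays, so the supremum over \(C_q\) equals the supremum over its closure. It remains to show \(\omega_q(G)<1\). Suppose instead that \(\omega_q(G)=1\).

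First, the supremum must be transferred to the convention of Proposition~\ref{thm:SV}. By the legal answer replacement, we may assume every strategy in a sequence approaching value one has legal Alice answers; this does not lower the incidence success. Equation~\eqref{eq:distribution-comparison}, with \(n\) fixed, then shows that \(P_{\rm inc}\to1\) if and only if \(P_{\rm all}\to1\). So the value under uniform sampling of all row and column pairs is also one. The dilation remark after~\eqref{eq:distribution-comparison}, via \cite[Lemma~4.1]{SV}, lets us pass to projective finite dimensional strategies with the same correlations. Hence the hypotheses of the cited criterion hold for our game with our POVM model.

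Proposition~\ref{thm:SV} now yields \(\eta>0\), dimensions \(d_k\), and homomorphisms \(f_k:F(S_\Sigma)\to\U(d_k)\) with \(\defect_{R_\Sigma}(f_k)\to0\) and \(\norm{f_k(J_\Sigma)-I_{d_k}}_{2,d_k}\ge\eta\) for all \(k\). Apply Proposition~\ref{prop:uniform-solution} with this \(\eta\) to obtain \(\delta>0\). Choose \(k\) with \(\defect_{R_\Sigma}(f_k)\le\delta\). Then \(\norm{f_k(J_\Sigma)-I_{d_k}}_{2,d_k}<\eta\), which contradicts the choice of the sequence. Therefore \(\omega_q(G)<1\).

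The main point of care is matching the presentations. Proposition~\ref{thm:SV} uses the standard presentation of the solution group, and Proposition~\ref{prop:uniform-solution} uses \(R_\Sigma\) from~\eqref{eq:explicit-solution-presentation}. These should be the same relator list, or at least give equivalent defects. If the two relator lists differ only by the ordering within row products, by inversion, or by duplicated commutators, then the commutation relators bound each reordering error linearly in the defect. Unitary invariance of \(\norm{\cdot}_{2,d}\) handles inversion and conjugation. So the defects are comparable up to a constant factor, which is all the argument needs.
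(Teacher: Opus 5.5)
Your proposal is correct and follows the paper's own proof. You assume $\omega_q(G)=1$, extract the sequence $f_k$ with a fixed $\eta$ from Proposition~\ref{thm:SV}, and contradict it using the tolerance $\delta$ from Proposition~\ref{prop:uniform-solution}; $\omega_q=\omega_{qa}$ then comes from continuity of the success functional and $C_{qa}=\overline{C_q}$. Your extra remarks on the distribution comparison, the POVM dilation, and matching the relator lists only spell out what the paper handles in the paragraphs preceding the corollary.
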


\begin{proof}
If $\omega_q(G)=1$, Proposition~\ref{thm:SV} gives a sequence
$f_k$ and a fixed $\eta>0$ as above. Choose $\delta>0$ for this
$\eta$ using Proposition~\ref{prop:uniform-solution}.
For sufficiently large $k$, the defect is at most $\delta$, so
\[
 \norm{f_k(J_\Sigma)-I_{d_k}}_{2,d_k}<\eta,
\]
contrary to the criterion. Finally, the success functional is
continuous and $C_{qa}=\overline{C_q}$, so its supremum over $C_{qa}$
equals its supremum over $C_q$.
\end{proof}

\subsection{A perfect commuting strategy}

The nontriviality of $J_\Sigma$ has a different consequence in the
commuting model. The construction of Cleve, Liu, and Slofstra~\cite{CleveLiuSlofstra2017}
uses the left and right regular representations to produce perfectly
consistent measurements.
We give the construction for our game to make the strategy explicit.

\begin{proposition}\label{prop:commuting}
The game $G$ admits a perfect commuting strategy. In particular,
$\omega_{qc}(G)=1$.
\end{proposition}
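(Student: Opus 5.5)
We follow the construction of Cleve, Liu, and Slofstra via the left and right regular representations of $\Sigma$. Let $\mathcal K=\ell^2(\Sigma)$ with orthonormal basis $(\delta_g)_{g\in\Sigma}$, and let $L_g\delta_{g'}=\delta_{gg'}$ and $R_g\delta_{g'}=\delta_{g'g^{-1}}$. These are unitary representations whose images commute with each other. Put $z=\frac12(1-J_\Sigma)\in\C\Sigma$. Because $J_\Sigma$ is a central involution and $J_\Sigma\neq1$ by Proposition~\ref{prop:solution-embedding}, $z$ is a nonzero central projection in the group algebra: $z^2=z=z^*$ and $z\delta_1=\frac12(\delta_1-\delta_{J_\Sigma})\neq0$. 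Take the unit vector $\Omega=\sqrt2\,z\delta_1=(\delta_1-\delta_{J_\Sigma})/\sqrt2$. It satisfies $L_{J_\Sigma}\Omega=-\Omega$ and $R_{J_\Sigma}\Omega=-\Omega$.

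Next define the measurements. Bob, on question $e$, measures the self-adjoint unitary $B_e=R_{x_e}$, an involution because $x_e^2=1$, with spectral projections $Q^e_\beta=\frac12(I+(-1)^\beta B_e)$. Alice, on row $r$ with support $e_{r,1}<e_{r,2}<e_{r,3}$, uses the three commuting involutions $A_{r,j}=L_{x_{e_{r,j}}}$. Row commutation makes the joint spectral projections
\[
 P^r_a=\prod_{j=1}^3\tfrac12\bigl(I+(-1)^{a_j}A_{r,j}\bigr)
\]
a projective measurement indexed by $a\in\F_2^3$. Every $P^r_a$ lies in $L(\Sigma)$ and every $Q^e_\beta$ lies in $R(\Sigma)$, so they commute.

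It remains to verify perfection. For parity, the row relation gives $A_{r,1}A_{r,2}A_{r,3}=L_{J_\Sigma}^{b_r}$, which acts as $(-1)^{b_r}$ on $\Omega$. The product of the three commuting involutions equals $\sum_a(-1)^{a_1+a_2+a_3}P^r_a$. Hence $\langle\Omega,P^r_a\Omega\rangle=0$ whenever $a_1+a_2+a_3\neq b_r$, since $\Omega$ lies in the $(-1)^{b_r}$ eigenspace of that product.

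For consistency, the key identity is $L_g\Omega=R_{g^{-1}}\Omega$ for every $g\in\Sigma$. To see it, note that $L_g\delta_1=\delta_g=R_{g^{-1}}\delta_1$ and $L_g\delta_{J_\Sigma}=\delta_{gJ_\Sigma}=\delta_{J_\Sigma g}=R_{g^{-1}}\delta_{J_\Sigma}$, using centrality of $J_\Sigma$. Since $x_e=x_e^{-1}$, this gives $A_{r,j}\Omega=B_{e_{r,j}}\Omega$. Therefore
\[
 \langle\Omega,A_{r,j}B_{e_{r,j}}\Omega\rangle=\norm{B_{e_{r,j}}\Omega}^2=1,
\]
so the $\pm1$ outcomes agree with probability one. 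Writing this probability as $\frac12(1+\langle\Omega,A_{r,j}B_e\Omega\rangle)$ in terms of the marginal of $P^r$ on coordinate $j$ shows $\Pr[a_j=\beta]=1$.

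Combining the two conditions, every incidence is won with probability one. This gives $\omega_{qc}(G)=1$, and the supremum is attained.

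The only delicate point is that $\Omega\neq0$ and that it lies in the $-1$ eigenspace of both $L_{J_\Sigma}$ and $R_{J_\Sigma}$. This is exactly where nontriviality and centrality of $J_\Sigma$ enter. The remaining steps are routine spectral calculus.
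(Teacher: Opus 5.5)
Your proof is correct and is essentially the paper's argument. Both use the left and right regular representations on $\ell^2(\Sigma)$, the state $(\delta_1-\delta_{J_\Sigma})/\sqrt2$, and the consistency identity $\lambda(x_e)\Omega=\rho(x_e)\Omega$ from centrality of $J_\Sigma$. The one minor difference is that the paper restricts to the $-1$ eigenspace of $\lambda(J_\Sigma)$, so wrong-parity projections vanish identically; you stay on all of $\ell^2(\Sigma)$ and show they have zero probability in $\Omega$, which is the variant the Lean formalization uses.
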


\begin{proof}
On $\ell^2(\Sigma)$, with orthonormal basis
$\{\delta_g:g\in\Sigma\}$, define the left and right regular
representations by
\[
 \lambda(g)\delta_k=\delta_{gk},\qquad
 \rho(g)\delta_k=\delta_{kg^{-1}}.
\]
These are unitary representations and commute with one another,
as their actions on the basis show. Let
\[
 \HH_-=\{\xi\in\ell^2(\Sigma):\lambda(J_\Sigma)\xi=-\xi\},
 \qquad
 \Omega=\frac{\delta_1-\delta_{J_\Sigma}}{\sqrt2}.
\]
Since $J_\Sigma\ne1$ and $J_\Sigma^2=1$, the vector $\Omega$ is
a unit vector in $\HH_-$. Centrality of $J_\Sigma$ and commutation
of the two regular representations make $\HH_-$ invariant under
all $\lambda(g)$, $\rho(g)$, and their inverses.

Set $X_e=\lambda(x_e)|_{\HH_-}$ and
$Y_e=\rho(x_e)|_{\HH_-}$. They are selfadjoint unitaries because
$x_e^2=1$. For each row $r$, its three $X_e$ commute and satisfy
\[
 \prod_{j=1}^3X_{e_{r,j}}=(-1)^{b_r}I_{\HH_-}.
\]
Define Alice's and Bob's measurement effects by
\begin{equation}\label{eq:joint-projectors}
 P^r_a=\prod_{j=1}^3
       \frac{I_{\HH_-}+(-1)^{a_j}X_{e_{r,j}}}{2},
 \qquad
 Q^e_\beta=\frac{I_{\HH_-}+(-1)^\beta Y_e}{2},
\end{equation}
where $a\in\F_2^3$ and $\beta\in\F_2$.
For a selfadjoint unitary $Z$, $(I+Z)/2$ and $(I-Z)/2$ are
orthogonal projections summing to the identity. The commuting
factors in Alice's product therefore give mutually orthogonal
projections whose sum is the identity; the same holds for Bob.
All Alice effects commute with all Bob effects because left and
right regular operators commute.

By~\eqref{eq:joint-projectors},
$X_{e_{r,j}}P^r_a=(-1)^{a_j}P^r_a$. Multiplying these identities
and using the row product gives
\[
 (-1)^{a_1+a_2+a_3}P^r_a=(-1)^{b_r}P^r_a.
\]
Thus $P^r_a=0$ whenever $a$ has the wrong parity.
Moreover, centrality of $J_\Sigma$ and $x_e^{-1}=x_e$ imply
\[
 X_e\Omega
 =\frac{\delta_{x_e}-\delta_{x_eJ_\Sigma}}{\sqrt2}
 =Y_e\Omega.
\]
At incidence $e=e_{r,j}$, Alice's marginal for bit $a_j$ is
$(I+(-1)^{a_j}X_e)/2$, obtained by summing over the other bits.
The probability of disagreement with Bob is consequently
\begin{align*}
 &\left\langle\Omega,
 \left(\frac{I+X_e}{2}\frac{I-Y_e}{2}
       +\frac{I-X_e}{2}\frac{I+Y_e}{2}\right)\Omega\right\rangle\\
 &\qquad=\frac12\bigl(1-\langle\Omega,X_eY_e\Omega\rangle\bigr)
 =\frac14\norm{(X_e-Y_e)\Omega}^2=0.
\end{align*}
Here the norm is the Hilbert space norm; the second equality uses
commutation and $X_e^2=Y_e^2=I$. Both winning conditions therefore
hold at every incidence with probability one.
\end{proof}

\subsection{The classical value}

The parity obstruction also determines the classical value: every
deterministic strategy loses at least one incidence, and the single
nonzero entry of $b$ makes this bound attainable.

\begin{proposition}\label{prop:classical}
For the uniform incidence distribution,
\[
 \omega_c(G)=1-\frac1{3m}
            =1-\frac1{4\,251\,456}.
\]
\end{proposition}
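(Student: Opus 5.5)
The proof has two halves: an upper bound valid for every classical strategy, and a single deterministic strategy attaining it. Since $\omega(G,\cdot)$ is linear and $C_c$ is the convex hull of deterministic correlations, $\omega_c(G)$ is a maximum over deterministic response functions. Thus it suffices to show two things: every pair of response functions $r\mapsto a^r\in\F_2^3$, $e\mapsto\beta_e\in\F_2$ loses at least one of the $3m$ incidences, and some pair loses exactly one.

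\emph{Upper bound.} Suppose a deterministic strategy wins every incidence. Then for each row $r$, the parity condition gives $a^r_1+a^r_2+a^r_3=b_r$. The consistency condition at $(r,e_{r,j})$ gives $a^r_j=\beta_{e_{r,j}}$ for $j=1,2,3$. Substituting, $\xi=(\beta_e)_e\in\F_2^n$ satisfies $\sum_{e\in V_r}\xi_e=b_r$ for every row, that is, $A\xi=b$. This contradicts Proposition~\ref{prop:matrix-size}. Hence every deterministic strategy loses at least one incidence, and its success probability is at most $1-1/(3m)$.

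\emph{Attainability.} Let $r_0=1\,417\,141$ be the unique row with $b_{r_0}=1$. We want $\xi\in\F_2^n$ with $A\xi=b+e_{r_0}=0$; the choice $\xi=0$ works. Take $\beta_e=0$ for all $e$, $a^r=(0,0,0)$ for $r\ne r_0$, and $a^{r_0}=(1,0,0)$. For $r\ne r_0$, both the parity and consistency conditions hold at all three incidences. At $r_0$ the parity is $1=b_{r_0}$. Consistency holds at positions $2$ and $3$ and fails only at position $1$. So exactly one of the $3m$ incidences is lost, giving success probability $1-1/(3m)$.

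Combining the two halves yields $\omega_c(G)=1-1/(3m)=1-1/4\,251\,456$, using $3m=4\,251\,456$ from Proposition~\ref{prop:matrix-size}. The only substantive input is the classical inconsistency of $A\xi=b$, which the column-parity argument already supplies. The step needing care is the reduction to deterministic strategies, including checking that a winning strategy with illegal answers cannot occur. This is immediate, because an illegal Alice answer loses at all three incidences of its row.
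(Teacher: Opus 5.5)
Your proposal is correct and follows the paper's proof essentially step for step: the upper bound reads Bob's answers as an assignment $\xi$ with $A\xi=b$, contradicting Proposition~\ref{prop:matrix-size}, and extends to mixtures by linearity. The matching strategy, with Bob always answering $0$ and Alice answering $(1,0,0)$ on the unique row with $b_r=1$ and $(0,0,0)$ elsewhere, loses exactly one incidence, just as in the paper.
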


\begin{proof}
Bob's answers in a deterministic strategy define an assignment
$\xi\in\F_2^n$. If every incidence were won, consistency and parity
would imply $A\xi=b$, contradicting
Proposition~\ref{prop:matrix-size}.
Thus every deterministic strategy loses at least one of the $3m$
equally likely incidences. By linearity of
\eqref{eq:success-functional}, the same upper bound $1-1/(3m)$
holds for mixtures of deterministic strategies.

To attain this bound, let Bob always answer zero. Alice answers
$(0,0,0)$ when $b_r=0$ and $(1,0,0)$ on the unique row with $b_r=1$,
using the increasing order of its columns.
Every parity condition holds, and consistency fails only when that
row and its first column are queried. Hence exactly one incidence
is lost.
\end{proof}

Together with Corollary~\ref{cor:quantum-strict},
Proposition~\ref{prop:commuting}, and
$\omega_c(G)\le\omega_q(G)$, this gives
\[
 0<\omega_{qc}(G)-\omega_q(G)
   =1-\omega_q(G)\le\frac1{4\,251\,456}.
\]

\subsection{Completion of the main theorems}

\begin{proof}[Proof of Theorem~\ref{thm:main}]
Proposition~\ref{prop:matrix-size} gives the matrix dimensions,
row sizes, and vector $b$.
Proposition~\ref{prop:uniform-solution} gives
$J_\Sigma\ne1$ and~\eqref{eq:main-quantifiers}.
Finally, Corollary~\ref{cor:quantum-strict} and
Proposition~\ref{prop:commuting} give
\[
 \omega_q(G)=\omega_{qa}(G)<1=\omega_{qc}(G).
\]
\end{proof}

\begin{proof}[Proof of Theorem~\ref{thm:existence}]
The game in Theorem~\ref{thm:main} is finite, has two players,
and has payoffs in $\{0,1\}$.
Its perfect commuting correlation has success probability one,
whereas every correlation in $C_{qa}$ has success probability
at most $\omega_{qa}(G)<1$.
Thus this correlation belongs to $C_{qc}\setminus C_{qa}$.
\end{proof}

\begin{remark}\label{rem:numerical-values}
The accompanying Lean~4 development also proves the explicit
lower bound
\[
 1-\omega_q(G)\ge 2^{-2^{50003}}.
\]
The declarations \lean{paper_classical_value_corollary} and
\lean{paper_quantum_gap_corollary} in
\lean{ThomGame/Construction/PaperValueCorollaries.lean}
formalize Proposition~\ref{prop:classical} and this quantitative
bound, respectively.
The section ``Further results'' in the repository's
\lean{README.md} provides the formal statements and references
to their proofs.
\end{remark}

\section*{AI use declaration}
Generative AI tools (GPT 6 Astra and Fable 5.1) were used to assist with mathematical exploration and proof development, the implementation and computational verification of the explicit construction, the writing and debugging of Lean 4 code, and the drafting and revision of this manuscript. The authors have reviewed the mathematical proofs and the entire Lean formalization process, including all AI-assisted contributions, and take full responsibility for the mathematical claims, computations, references, and final exposition. The accompanying formal proofs are checked by Lean’s kernel under the axioms documented in the repository. The correspondence between the formal definitions and theorem statements and the mathematical claims in this manuscript remains the authors’ responsibility.

\section*{Acknowledgements.}
    The authors want to thank Li Zhou, Sizhuo Yan, Jianting Yang, Hao Liang and Jiaqi Wang for valuable discussions. Minbo Gao and Tianshi Yu are supported by the National Key R\&D Program of China 2023YFA1009403. 
    Lihong Zhi is supported by the National Key R\&D Program of China 2023YFA1009401 and the Strategic Priority Research Program of the Chinese Academy of Sciences under Grant XDA0480501.

\bibliographystyle{alphaurl}
\bibliography{explicit_game_references}

\newpage
\appendix

\section{Canonical counts and reproducibility}
\label{app:data}

This appendix records the relation counts and word lengths used in
Section~\ref{sec:translation}, and identifies the Lean sources for
reproducing them. We first count the raw defining words for $Q$ and
$\Lambda$, then give the length distribution of the canonical list
$R_0$. The generator numbering, normalization procedure, and matrix
construction are fixed in Section~\ref{sec:translation}.

\subsection{The raw relation families}

The families (E0) to (E4) are
\eqref{eq:q-power}--\eqref{eq:q-triple}.
There are six root positions and seven coefficient labels.
Each equation $u=v$ contributes the word $uv^{-1}$, whose raw length
counts all signed generator occurrences before cancellation.
We count every allowed ordered index choice, retaining repetitions
and freely trivial words, including commutators of a generator with
itself. Thus the number of instances need not equal the number of
distinct relators.

\begin{center}
\small
\begin{tabular}{lrrr}
\toprule
Family & Instances & Length per word & Total length\\
\midrule
E0 & $42$ & $5$ & $210$\\
E1 & $882$ & $4$ & $3\,528$\\
E2 & $42$ & $5$ & $210$\\
E3 & $42$ & $5$ & $210$\\
E4 & $6\,174$ & $10$ & $61\,740$\\
Shear commutators & $18$ & $4$ & $72$\\
Shear root relations & $6$ & $5$ & $30$\\
Shear torsion relation & $1$ & $12$ & $12$\\
Action, unchanged coefficients & $360$ & $4$ & $1\,440$\\
Action, changed coefficients & $144$ & $7$ & $1\,008$\\
Negative recovery & $18$ & $14$ & $252$\\
\midrule
Total & $7\,729$ & & $68\,712$\\
\bottomrule
\end{tabular}
\end{center}

For a fixed root position $(i,j)$, let $k$ be the remaining element
of $\{1,2,3\}$. The roots $(a,b)$ satisfying
$\{i,a\}\cap\{j,b\}=\varnothing$ are exactly
$(i,j)$, $(i,k)$, and $(k,j)$.
Hence E1 has $6\cdot3\cdot7^2=882$ instances.
Family E0 has $6\cdot7=42$, and each of E2 and E3 has
$3!\cdot7=42$. For E4, there are $3!$ choices of distinct
$(i,j,k)$, three compatible outer roots, and three independent
coefficient labels, giving $3!\cdot3\cdot7^3=6\,174$.

The same compatibility count gives $6\cdot3=18$ shear commutators.
The shear root and torsion relations contribute six and one instance,
respectively. For the action relations there are $6\cdot2\cdot6=72$
choices of shear $s_{ab}$, sign, and root position.
The five unchanged coefficient labels are $0$ and $\pm c$ with
$c\in\{1,2,3\}\setminus\{b\}$; the two changed labels are $\pm b$.
This gives $72\cdot5=360$ and $72\cdot2=144$ instances.
Negative recovery contributes $6\cdot3=18$ instances.

For words of raw lengths $p$ and $q$, their commutator has raw length
$2p+2q$. This gives lengths four for E1 and ten for E4.
The root relations have length five, while the action relations have
lengths $3+1=4$ or $3+4=7$.
Each recovery word $r_c$ has length six, so its relator has length
$6+1+6+1=14$.

The raw presentation of $\Lambda$ contains two copies of the table,
with shared generator names identified, and the $75$ additional words
in~\eqref{eq:lambda-raw-relations}, of total length $309$.
It therefore has
\[
 2\cdot7\,729+75=15\,533\ \text{words},\qquad
 2\cdot68\,712+309=137\,733\ \text{letters}.
\]

\begin{samepage}
\subsection{The canonical list}

Apply the normalization from Section~\ref{sec:translation}:
freely and cyclically reduce each word, discard trivial words,
choose the lexicographically least signed tuple among its cyclic rotations and their
inverses, then remove repetitions and sort.
The resulting list $R_0$ has the following length distribution.

\begin{center}
\small
\begin{tabular}{lrrrrrrrrr}
\toprule
Length & $2$ & $4$ & $5$ & $7$ & $8$ & $10$ & $12$ & $14$ & $15$\\
Count & $1$ & $1\,153$ & $186$ & $288$ & $984$ & $10\,212$ & $2$ & $36$ & $1$\\
\bottomrule
\end{tabular}
\end{center}
\end{samepage}

Summing the counts, and then weighting them by length, gives
\[
 |R_0|=12\,863,\qquad \sum_{r\in R_0}|r|=118\,095,
\]
as stated in Lemma~\ref{lem:canonical-lambda}.
Lemma~\ref{lem:collegial-input} and Proposition~\ref{prop:matrix-size}
then give the numbers of wheel occurrences, equations, and variables.

\subsection{Reproducing the finite data}

The ordered canonical list is defined by
\lean{ThomGame.Lambda.normalizedRelators}.
Its verification is recorded in two files:
\begin{itemize}
 \item \lean{ThomGame/Certificates/LambdaNormalized.lean}
       proves that the normalization program equals the explicit
       relation table.
 \item \lean{ThomGame/Groups/LambdaChecks.lean}
       checks the number of relators and their total length using
       \texttt{decide +kernel}.
\end{itemize}

The same development specifies every wheel word, matrix entry, and
coordinate of $b$, so the binary system can be generated directly
from these finite definitions. Separate expanded matrix and
presentation files are not needed.
The formalization appendix records the source correspondence,
pinned versions, and verification commands.

\section{Formalization and source concordance}\label{app:formalization}

The tables below identify the formal statements corresponding to the
main conclusions and the group construction.
The main theorems have the same mathematical statements in the manuscript
and in Lean; intermediate entries distinguish a full statement, a
specialization to the specified groups, and the consequence used in the
formal proof. Source-version information is recorded in
Table~\ref{tab:lean-checks}. All declaration names omit the prefix
\lean{ThomGame.}; source paths beginning with \lean{Analysis/},
\lean{Construction/}, \lean{Groups/}, \lean{Quantum/}, or
\lean{Certificates/} are relative to \lean{ThomGame/}.

\subsection{Final statements and mathematical models}

\begingroup
\small
\setlength{\tabcolsep}{4pt}
\renewcommand{\arraystretch}{1.08}
\begin{longtable}{@{}p{0.29\linewidth}p{0.68\linewidth}@{}}
\caption{Entry points for the main theorems.}
\label{tab:lean-conclusions}\\
\toprule Paper statement & Lean declaration and scope\\ \midrule
\endfirsthead
\toprule Paper statement & Lean declaration and scope (continued)\\ \midrule
\endhead
\bottomrule\endfoot
Theorem~\ref{thm:existence}: existence of a finite binary-payoff game &
\lean{Quantum.exists_finiteGame_povm_quantum_commuting_separation}
in \lean{Construction/NonlocalGameSeparation.lean}.
Quantifies over the four finite alphabets and the game; proves binary
payoffs and $\omega_q<1=\omega_{qc}$ for arbitrary POVMs.
Nonempty alphabets are checked by
\lean{SeparationCheck.binary_separation} in
\lean{scripts/CheckSeparation.lean}.
The equality $\omega_q=\omega_{qa}$ follows from
\lean{Quantum.FiniteGame.omegaQPOVM_eq_omegaQaPOVM}
in \lean{Quantum/POVMGameValue.lean};
the inequality $C_{qa}\ne C_{qc}$ follows from the value separation.
\\[5pt]
Theorem~\ref{thm:main}: the specified construction &
\lean{Construction.paper_main_results_povm}
in \lean{Construction/PaperPOVMGap.lean} proves $J_\Sigma\ne1$,
\eqref{eq:main-quantifiers}, and
$\omega_q=\omega_{qa}<1=\omega_{qc}$.
The matrix dimensions, row supports, and $b$ are supplied separately
by the declarations in Table~\ref{tab:lean-construction}.
\end{longtable}
These entry theorems have no unproven mathematical hypotheses.
The following definitions fix the meaning of their statements.


\begin{longtable}{@{}p{0.29\linewidth}p{0.68\linewidth}@{}}
\caption{Definitions and conventions used by the entry theorems.}
\label{tab:lean-models}\\
\toprule Paper object & Lean definition or identification\\ \midrule
\endfirsthead
\toprule Paper object & Lean definition or identification (continued)\\ \midrule
\endhead
\bottomrule\endfoot
Finite game and success functional
(Section~\ref{subsec:statement-language}) &
\lean{Quantum.FiniteGame} and \lean{Quantum.FiniteGame.success}
in \lean{Quantum/FiniteGame.lean};
\lean{Quantum.CorrelationTable} in \lean{Quantum/Correlation.lean}.
The score is the finite sum of question weight, payoff, and
conditional answer probability.\\[5pt]
POVMs and tensor-product strategies &
\lean{Quantum.POVM} in \lean{Quantum/POVM.lean};
\lean{Quantum.FinitePOVMStrategy} and its \lean{correlation}
in \lean{Quantum/FinitePOVMStrategy.lean}.
Local dimensions are arbitrary positive integers.
\lean{Quantum.FinitePOVMStrategy.exists_coordinates}, in
\lean{Quantum/FinitePOVMCoordinates.lean}, identifies the coordinate
model with arbitrary finite-dimensional local Hilbert spaces.\\[5pt]
Commuting strategies &
\lean{Quantum.CommutingPOVMStrategy} and its \lean{correlation}
in \lean{Quantum/CommutingPOVMStrategy.lean}.
The Hilbert space may be infinite-dimensional; effects commute
across the two players. Both strategy models use unit-vector states.\\[5pt]
Correlation sets and values &
\lean{Quantum.FiniteGame.value} in \lean{Quantum/GameValue.lean};
\lean{Quantum.FiniteGame.omegaQPOVM},
\lean{Quantum.FiniteGame.omegaQaPOVM}, and
\lean{Quantum.FiniteGame.omegaQcPOVM}
in \lean{Quantum/POVMGameValue.lean}.
Values are suprema of scores over the corresponding correlation
sets; the $qa$ set is the closure of the $q$ set.\\[5pt]
Indices and binary scalars &
\lean{Construction.rowEquiv_number} and
\lean{Construction.colEquiv_number}
in \lean{Construction/Numbering.lean}.
Lean indices have type \lean{Fin}; their values plus one are the
paper's row and column numbers. Binary scalars are \lean{ZMod 2}.\\[5pt]
$\Sigma$, $R_\Sigma$, and increasing row order &
\lean{SolutionGroup.Paper.relators} in
\lean{Groups/PaperSolutionGroup.lean};
\lean{Construction.paperSystem} in
\lean{Construction/PaperOrderedSystem.lean}.
These use $[x,J]$, row pairs $i<j$, and the row product in increasing
column order, as in the manuscript.\\[5pt]
$G(A,b)$ and \eqref{eq:success-functional} &
\lean{Construction.paperGame} in \lean{Construction/PaperGame.lean},
using \lean{SparseSystem.incidenceGame} in
\lean{Quantum/IncidenceGame.lean}.
The distribution is uniform on the $4{,}251{,}456$ incidences;
Alice has eight answers and Bob has two.\\
\end{longtable}

\subsection{Approximation, normalization, and the group obstruction}

\begin{longtable}{@{}p{0.29\linewidth}p{0.68\linewidth}@{}}
\caption{Analytic statements and their scope in the formal proof.}
\label{tab:lean-analysis}\\
\toprule Paper statement & Lean declarations and coverage\\ \midrule
\endfirsthead
\toprule Paper statement & Lean declarations and coverage (continued)\\ \midrule
\endhead
\bottomrule\endfoot
Normalized Hilbert--Schmidt norm &
\lean{Analysis.hsNorm} and \lean{Analysis.hsNorm_eq_sqrt_trace}
in \lean{Analysis/NormalizedHilbertSchmidt.lean}.
The norm is defined on actual complex matrices and agrees with
$\norm{\cdot}_{2,d}$.\\[5pt]
Definition~\ref{def:approx}: defect and uniform asymptotic triviality &
\lean{Analysis.MatrixAssignment},
\lean{Analysis.IsApproxRepresentation},
\lean{Analysis.relationDefect},
\lean{Analysis.relationDefect_le_iff}, and
\lean{Analysis.ApproximatelyTrivial}
in \lean{Analysis/ApproxRepresentation.lean}.
The defect is zero for an empty relator set. The quantifier order is
$\forall\eta>0\;\exists\delta>0\;\forall d>0\;\forall f$,
with conclusion $\norm{f(v)-I}_{2,d}<\eta$.\\[5pt]
Lemma~\ref{lem:ultraproduct}: equivalence and unitary lifting &
\lean{Analysis.approximatelyTrivial_iff_matrixQuotient_killed}
in \lean{Analysis/ApproximationFiniteCriterion.lean};
\lean{Analysis.exists_matrixQuotient_unitary_lift}
in \lean{Analysis/MatrixUnitaryLifting.lean}.
The criterion includes arbitrary target homomorphisms and varying
matrix dimensions. Every quotient unitary has unitary matrix lifts.\\[5pt]
Theorem~\ref{thm:normalization}: analytic ingredients &
\lean{Analysis.exists_matrixALT_internal_relative_commutant_and_center}
in \lean{Analysis/MatrixALTTheorem5_2.lean};
\lean{Analysis.matrixThom_internal_normalization}
in \lean{Analysis/MatrixThomInternalNormalization.lean}.
These prove internality from a spectral gap, and normalization
from compression inclusions and an internal common-commutant identity.\\[5pt]
Theorem~\ref{thm:normalization}: application to the specified $Q,H$ &
\lean{Analysis.matrixThom_groupCentralizer_normalized}
in \lean{Analysis/MatrixThomGroupNormalization.lean};
\lean{Construction.compressor_spectral_gaps} and
\lean{Construction.compressor_internality}
in \lean{Construction/UnconditionalApproximation.lean}.
The first gives the unitary-group centralizer formulation; the latter
discharge the analytic hypotheses for this pair. The general
Kazhdan/infranormal statement is not reproduced verbatim.\\
\end{longtable}

\begin{longtable}{@{}p{0.29\linewidth}p{0.68\linewidth}@{}}
\caption{The compressor pair and central obstruction.}
\label{tab:lean-groups}\\
\toprule Paper statement & Lean declarations and coverage\\ \midrule
\endfirsthead
\toprule Paper statement & Lean declarations and coverage (continued)\\ \midrule
\endhead
\bottomrule\endfoot
Proposition~\ref{prop:finite-cover}: finite presentation and property~(T) &
\lean{PrimeFiveRankThreeCover.Cover} in
\lean{Analysis/PrimeFiveRankThreeCover.lean};
\lean{Analysis.coverRootGeneratingSet_nonzero_invariant}
in \lean{Analysis/PrimeFiveCoverKazhdan.lean}.
The presentation and positive Kazhdan constant are verified for
the rank-three covers.\\[5pt]
Proposition~\ref{prop:finite-cover}: covering maps &
\lean{Compressor.coverToElementary_surjective} and
\lean{Compressor.coverToPositive_surjective}
in \lean{Groups/CompressorCoverSurjections.lean}.
These are the required instances $\Delta_6\twoheadrightarrow N$ and
$\Delta_3\twoheadrightarrow H$; the assertion for every coefficient
algebra $R$ is not separately formalized.\\[5pt]
Proposition~\ref{prop:kazhdan-pair}: $N\triangleleft Q$ and property~(T) &
\lean{Compressor.elementarySubgroup_normal}
in \lean{Groups/CompressorNormal.lean};
\lean{Compressor.elementaryKazhdanSet_nonzero_invariant} and
\lean{Compressor.positiveKazhdanSet_nonzero_invariant}
in \lean{Analysis/CompressorFiniteKazhdanSets.lean};
\lean{Compressor.hasFiniteHilbertKazhdanSet}
in \lean{Groups/CompressorKazhdan.lean}.
These give normality and property~(T) for $N,H,Q$, respectively.\\[5pt]
Proposition~\ref{prop:kazhdan-pair}: compression and infranormality &
\lean{Compressor.shear_conjugate_subgroup_le}
in \lean{Groups/CompressorCompression.lean};
\lean{Compressor.withShears_eq_top} and
\lean{Compressor.compression_generates}
in \lean{Groups/CompressorGeneration.lean}.
All six shears compress $H$ and, together with $H$, generate $Q$.\\[5pt]
Propositions~\ref{prop:Laurent-quotient} and
\ref{prop:strict-compression} &
\lean{LaurentModel.representation} and
\lean{LaurentModel.representation_ofGenerator}
in \lean{Groups/LaurentRelations.lean};
\lean{Compressor.t_conjugate_subgroup_lt} and
\lean{Compressor.h_not_mem_conjugate_positiveSubgroup}
in \lean{Groups/LaurentSeparation.lean}.
The specified homomorphism into the Laurent semidirect product
proves $tHt^{-1}<H$ and $h\notin tHt^{-1}$.\\[5pt]
Proposition~\ref{prop:double-obstruction}: specified instance &
\lean{Double.amalgamEquiv} in \lean{Groups/DoubleAmalgam.lean};
\lean{Double.obstruction_infiniteOrder}
in \lean{Construction/CentralNontrivial.lean};
\lean{Double.hom_obstruction_eq_one_of_normalizes_centralizer}
in \lean{Groups/DoubleCentralizer.lean}.
These implement the normal-form and centralizer arguments for the
specified double. The source group is not an arbitrary pair $Q,H$.\\[5pt]
Proposition~\ref{prop:central-obstruction}: specified instance &
\lean{Lambda.JElement_square} and \lean{Lambda.JElement_central}
in \lean{Groups/LambdaRelations.lean};
\lean{Lambda.JElement_ne_one}
in \lean{Construction/CentralNontrivial.lean};
\lean{Construction.lambda_matrix_J_eq_one}
in \lean{Construction/UnconditionalApproximation.lean}.
For the specified $\Lambda$, $J$ is a nontrivial central involution
and every matrix-quotient representation kills $J$.
The general proposition for arbitrary $D,w$ is not a single formal
entry theorem.\\
\end{longtable}

\subsection{Finite construction and passage to the game}

\begin{longtable}{@{}p{0.29\linewidth}p{0.68\linewidth}@{}}
\caption{Finite presentations, matrix entries, and the transfer of $J$.}
\label{tab:lean-construction}\\
\toprule Paper statement & Lean declarations and coverage\\ \midrule
\endfirsthead
\toprule Paper statement & Lean declarations and coverage (continued)\\ \midrule
\endhead
\bottomrule\endfoot
Lemma~\ref{lem:canonical-lambda}: canonical presentation and counts &
\lean{Lambda.normalizationEquiv},
\lean{Lambda.normalizationEquiv_of}, and
\lean{Lambda.normalized_cyclicallyReduced} in
\lean{Groups/LambdaNormalization.lean};
\lean{Lambda.normalized_relator_count} and
\lean{Lambda.normalized_relator_total_length}
in \lean{Groups/LambdaChecks.lean}.
Normalization preserves the group and its generators and yields
cyclically reduced words; checked
certificates give $12{,}863$ relators of total length $118{,}095$.\\[5pt]
The involution embedding \eqref{eq:first-embedding} and
Lemma~\ref{lem:collegial-input} &
\lean{Construction.lambdaToInvolution_injective}
in \lean{Construction/InvolutionFaithful.lean};
\lean{Construction.involutionPresentation_collegial}
in \lean{Construction/InvolutionGroup.lean}.
The first map is proved injective. The wheel-word checks, including
cyclic reduction, even letter counts, and minimum length, are in
\lean{Construction/Wheel.lean}.\\[5pt]
Proposition~\ref{prop:solution-embedding}: homomorphism and
preservation of $J$ &
\lean{Construction.lambdaToSigma},
\lean{Construction.lambdaToSigma_of}, and
\lean{Construction.lambdaToSigma_J}
in \lean{Construction/WheelHom.lean};
\lean{Construction.J_sigma_ne_one}
in \lean{Construction/SigmaNontrivial.lean}.
The generator formula and preservation of $J$ are proved;
$J_\Sigma\ne1$ is proved separately by diagram normalization and
wheel collapse. Full injectivity of $\Lambda\to\Sigma$, asserted
in the proposition, is not proved in the formal development. The homomorphism preserving $J$, together with the separate
proof that $J_\Sigma\ne1$, suffices for the main separation. \\[5pt]
Proposition~\ref{prop:matrix-size}: dimensions and entries &
\lean{Construction.row_count}, \lean{Construction.col_count}, and
\lean{Construction.nonzero_count}
in \lean{Construction/Counts.lean};
\lean{Construction.A_eq_one_iff},
\lean{Construction.sourceTriples_injective}, and
\lean{Construction.b_formula}
in \lean{Construction/MatrixData.lean}.
These identify all three distinct entries of every row and
$b=e_{1{,}417{,}141}$, as well as the stated counts.\\[5pt]
Proposition~\ref{prop:matrix-size}: parity and inconsistency &
\lean{Construction.wheelWord_count_even}
in \lean{Construction/Wheel.lean};
\lean{Construction.A_b_no_solution}
in \lean{Construction/MatrixData.lean}.
The former and the incidence formulas give the column-parity
calculation; there is no separate entry theorem for all column sums.
The latter proves inconsistency using the final odd wheel.\\[5pt]
Proposition~\ref{prop:uniform-solution} and
\eqref{eq:main-quantifiers} &
\lean{Construction.paper_approx_to_engineering},
\lean{Construction.engineering_approx_to_paper}, and
\lean{Construction.paperApproximation_hsNorm}
in \lean{Construction/PaperApproximation.lean}.
The presentation comparison has error factor at most four.
The final estimate uses the manuscript's literal relator convention
and gives its uniform quantifiers over all positive dimensions.\\
\end{longtable}

\begin{longtable}{@{}p{0.29\linewidth}p{0.68\linewidth}@{}}
\caption{Game-value arguments.}
\label{tab:lean-game-proofs}\\
\toprule Paper statement & Lean declarations and coverage\\ \midrule
\endfirsthead
\toprule Paper statement & Lean declarations and coverage (continued)\\ \midrule
\endhead
\bottomrule\endfoot
Proposition~\ref{thm:SV}: the implication used here &
\lean{SparseSystem.near_perfect_approximation}
in \lean{Quantum/NearPerfectApproximation.lean}.
For three-variable rows and the incidence distribution, a
sufficiently successful finite projective strategy produces an
approximate representation with $J=-I$. The converse and the
general equivalence stated in the proposition are not formalized
by this declaration.\\[5pt]
Corollary~\ref{cor:quantum-strict} and the POVM bridge &
\lean{Construction.paper_omegaQ_lt_one}
in \lean{Construction/PaperQuantumGap.lean};
\lean{Quantum.FinitePOVMStrategy.exists_projective}
in \lean{Quantum/FinitePOVMDilation.lean};
\lean{Construction.paper_omegaQPOVM_lt_one}
in \lean{Construction/PaperPOVMGap.lean}.
Finite local dilation preserves the entire correlation table and
therefore transfers the strict bound to arbitrary POVMs.\\[5pt]
Proposition~\ref{prop:commuting}: perfect commuting strategy &
\lean{Construction.paperCommutingStrategy} and
\lean{Construction.paperCommutingStrategy_perfect}
in \lean{Construction/PaperCommutingStrategy.lean}.
The formal model keeps $\ell^2(\Sigma)$ with state
$(\delta_1-\delta_{J_\Sigma})/\sqrt2$; every rejected outcome
has zero probability. This realizes the same perfect score as
the restriction to $\HH_-$ in the text. Inclusion of projective
strategies among POVM strategies gives the POVM conclusion.\\[5pt]
Proposition~\ref{prop:classical} and Remark~\ref{rem:numerical-values}:
classical value and explicit quantum gap &
\lean{Construction.paper_classical_value_corollary} and
\lean{Construction.paper_quantum_gap_corollary}
in \lean{Construction/PaperValueCorollaries.lean}.
The former proves
$\omega_c(G)=1-1/4{,}251{,}456$, including shared randomness.
The latter proves
$1-\omega_q(G)\ge 2^{-2^{50003}}$
for arbitrary finite dimensional POVM strategies.
Both are closed theorems with no additional mathematical hypotheses.\\


\end{longtable}

\subsection{Verification commands and source records}

\begin{longtable}{@{}p{0.29\linewidth}p{0.68\linewidth}@{}}
\caption{Checks to run from the repository root.}
\label{tab:lean-checks}\\
\toprule Verification target & Command or record\\ \midrule
\endfirsthead
\toprule Verification target & Command or record (continued)\\ \midrule
\endhead
\bottomrule\endfoot
Theorem~\ref{thm:existence} &
\texttt{lake -{}-fail-fast build \textbackslash}\newline
\hspace*{1em}\lean{ThomGame.Construction.NonlocalGameSeparation}
followed by
\texttt{lake env lean} \lean{scripts/CheckSeparation.lean}.
The script checks the binary-payoff statement and a version with
all four alphabets nonempty, and audits their axiom dependencies.\\[5pt]
All project modules and final statements &
\texttt{lake -{}-fail-fast build}, followed by
\texttt{lake env lean} \lean{scripts/CheckMain.lean},
\texttt{lake env lean} \lean{scripts/CheckPOVM.lean}, and
\texttt{lake env lean} \lean{scripts/CheckCorollaries.lean}.\\[5pt]
Project-wide axiom audit &
\texttt{lake env lean} \lean{scripts/AuditAxioms.lean}.
Recursively checks every imported project theorem; only
\lean{propext}, \lean{Classical.choice}, and \lean{Quot.sound}
are permitted. Additional axioms, including \lean{sorryAx},
cause failure.\\[5pt]
Pinned versions &
\lean{lean-toolchain}: \lean{leanprover/lean4:v4.35.0-rc3}.
The mathlib commit in \lean{lakefile.toml} is
\lean{16efc2c756299924184fe015f6384ad6538d42c7}.\\[5pt]
Source version and verification records &
Source release \lean{v0.1.0}, commit
\lean{714041680d5a7125d0f475d63241e433fce65713},
with dated build and axiom-audit records in the repository's
\lean{validation/} directory. The verification commands above
use the repository source and its pinned dependencies.\\
\end{longtable}
\endgroup

The build commands may reuse an existing cache. A fresh source
copy without the project's \lean{.lake/build} directory rebuilds the
project dependencies of the selected target. The guide
\lean{validation/minimal-separation-check.md} describes this scope;
the dated logs inside \lean{validation/} record earlier checks.
The commands are reproducibility instructions; the logs are records of
the runs they identify. Lean checks the formal derivations and their
axiom dependencies. The tables above separately identify the definitions,
hypotheses, and conclusions that must be compared with the paper.

\clearpage

\end{document}